\numberwithin{equation}{section}
\theoremstyle{plain}
\newtheorem{prop}{Proposition}[section]
\newtheorem{thm}[prop]{Theorem}
\newtheorem{lemma}[prop]{Lemma}
\newtheorem{con}[prop]{Conjecture}
\theoremstyle{definition}
\newtheorem{Def}{Definition}[section]
\newtheorem{remark}{Remark}[section]
\newcommand{\norm}[1]{\left\lVert#1\right\rVert}
\newcommand{\abs}[1]{\left|#1\right|}
\newcommand{\td}{\text{d}}
\def\be{\begin{equation}}
\def\ee{\end{equation}}
\def\bea{\begin{eqnarray}}
\def\eea{\end{eqnarray}}
\begin{document}

\title{Remarks on mass and angular momenta for $U(1)^2$-invariant initial data}

\author{Aghil Alaee}
\address{Department of Mathematics and Statistics\\ Memorial University of Newfoundland\\St John's NL A1C 4P5, Canada}
\email{aak818@mun.ca }

\author{Hari K. Kunduri}
\address{Department of Mathematics and Statistics\\ Memorial University of Newfoundland\\St John's NL A1C 4P5, Canada}
\email{hkkunduri@mun.ca}

\date{\today}

\begin{abstract}
We extend Brill's positive mass theorem to a large class of asymptotically flat, maximal, $U(1)^2$-invariant initial data sets on simply connected four dimensional manifolds $\Sigma$. Moreover, we extend the local mass angular momenta inequality result \cite{alaee2015proof} for $U(1)^2$ invariant black holes to the case with nonzero stress energy tensor with positive matter density and energy-momentum current invariant under the above symmetries.
\end{abstract}
\maketitle

\section{Introduction} 
In \cite{brill1959positive} Brill proved a positive energy theorem for a certain class of maximal, axisymmetric initial data sets on $\mathbb{R}^3$. Brill's theorem has been extended by Dain \cite{dain2008proof} and Gibbons and Holzegel  \cite{gibbons2006positive} for a larger class of 3 dimensional initial data.   Subsequently, Chrus\'ciel\cite{chrusciel2008masspositivity} generalized the result to any maximal initial data set on a simply connected manifold (with multiple asymptotically flat ends) admitting a $U(1)$ action by isometries. Moreover, in \cite{gibbons2006positive}  a positive energy theorem was proved for a restricted class of maximal, $U(1)^2$-invariant, four-dimensional initial data sets on $\mathbb{R}^4$.  The first purpose of this note is to generalize this latter result to a larger class of 4+1 initial data. In particular, our result extends the work of  \cite{gibbons2006positive} in three main directions: 
\begin{enumerate}
\item We consider the general form of a $U(1)^2$-invariant metric (i.e. we do not assume the initial data has an orthogonally transitive $U(1)^2$ isometry group) on asymptotically flat, simply connected, four-dimensional manifolds $\Sigma$ admitting a torus action.  
\item The orbit space $\mathcal{B}\cong\Sigma/U(1)^2$ of $\Sigma$ belongs to a larger class $\Xi$ which is defined below in Definition \ref{orbitclass}.  The boundary conditions on axis and fall-off conditions at spatial infinity are weaker than those considered in \cite{gibbons2006positive}. In particular they include the data corresponding to maximal spatial slices of the Myers-Perry black hole. 
\item The manifold $\Sigma$ may possess an additional end (either asymptotically flat or asymptotically cylindrical of the form $\mathbb{R} \times S^3$). Such $\Sigma$ arise in the example of complete initial data for black hole spacetimes. The existence of non-trivial topology is also required for initial data to carry non-vanishing angular momenta.  The results also hold for data satisfying (1) and (2) on  $\mathbb{R}^4$. 
\end{enumerate}   

The second main result of this work is to extend the local mass-angular momenta inequality proved in \cite{alaee2015proof} to the non-vacuum case with positive energy density and vanishing energy current in directions tangent to the generators of the isometry group. This result naturally extends the result of \cite{dain2013lower} to the 4+1-dimensional setting. 

\section{Positivity of mass}
An asymptotically flat maximal initial data set $(\Sigma,h,K,\mu,j)$ must satisfy the Einstein constraint equations
\begin{equation}
R_h-|K|^2_h=16\pi\mu,\qquad \text{div}K=8\pi j\,.
\end{equation}
where $\mu$ is the energy density, $j$ is an energy-momentum current, and $R_h$ and $|K|^2_h$ are  respectively the Ricci scalar curvature and full contraction of $K$ with respect to $h$. $\Sigma$ is assumed to be a complete, oriented, simply connected asymptotically flat spin manifold with an additional asymptotic end.  We now briefly review the discussion\footnote{The replaced preprint version contains an improved discussion of the geometry of $\Sigma$.}  in \cite{alaee2014mass}.  As proved in \cite{hollands2011further,orlik1970actions}  if the manifold-with boundary $M$ is a spatial slice of the domain of outer communications of an asymptotically flat black hole spacetime admitting an $U(1)^2$ action, then $\Sigma \cong \mathbb{R}^4 \# n \, (S^2 \times S^2)-B$ for some integer $n$ where $B$ is a four-manifold with closure $\bar{B}$ such that $\partial{\bar{B}} = H$ and $H$ is a spatial cross section of the event horizon.  We obtain a complete manifold $\Sigma$ by doubling $M$ across its boundary $\partial M$ \cite{alaee2014mass}.  For example, complete initial data for the non-extreme Myers-Perry black hole has $\Sigma \cong \mathbb{R} \times S^3$, which has two asymptotically flat ends.  For extreme black hole initial data, a spatial slice of the domain of outer communications is already complete (the horizon is an infinite proper distance away from any point in the interior). Complete initial data for the extreme Myers-Perry black hole again has $\Sigma \cong \mathbb{R} \times S^3$, although the geometry is now cylindrical at one end.  Note that initial data for non-extreme and extreme black rings have different topology \cite{alaee2014mass}. 

We consider $U(1)^2 = U(1) \times U(1)$ invariant data with generators $\xi_{(i)}$ for $i=1,2$. $\Sigma$ is therefore equipped with a $U(1)^2$ action and further $\mathcal{L}_{\xi_{(i)}}K=\mathcal{L}_{\xi_{(i)}}h=0$. It proves useful to represent our space of functions  on the two-dimensional orbit space $\mathcal{B} \equiv \Sigma/U(1)^2$.  in general the action will have fixed points (i.e. on points where a linear combination of the $\xi_{(i)}$ vanish). A careful analysis  \cite{hollands2008uniqueness} establishes that $\mathcal{B}$ is an analytic, simply connected manifold with boundaries and corners and can be described as follows. Define the Gram matrix $\lambda_{ij} = \xi_{(i)} \cdot \xi_{(j)}$. On interior points of $\mathcal{B}$  the rank of $\lambda_{ij}$ is 2.  The boundary is divided into segments. On each such segment the rank of $\lambda_{ij}$ is one and there is an integer-valued vector $v^i$ such that $\lambda_{ij}v^j =0$ on each point of the segment (i.e. the Killing field $v^i \xi_i$ vanishes on this segment). On corner points, where adjacent boundary segments meet, the rank of $\lambda_{ij}$ vanishes. Moreover, if $\bm{v}_{s} = (v^1_s,v^2_s)^t$ and $\bm{v}_{s+1}$ are vectors associated with two adjacent boundary segments then we must have $\det(\bm{v}_s,\bm{v}_{s+1})=\pm 1$  \cite{hollands2008uniqueness}.  Finally, we note that since $\Sigma$ has two asymptotic ends, the two-dimensional orbit space is an open manifold with two ends.  Note that at interior points,  the orbit space is equipped with the quotient metric
\begin{equation}\label{orbitspacemetric}
q_{ab} = h_{ab} - \lambda^{ij} \xi_{ia} \xi_{ib}
\end{equation} 

The orbit space $\mathcal{B}$ is a simply connected, analytic two-manifold with (smooth) boundaries and corners, with two ends. By the Riemann mapping theorem, it can be analytically mapped to the upper half plane of $\mathbb{C}$ with a point removed on the real axis (if the point is removed anywhere else, then the region will not be simply connected). The boundary of $\mathcal{B}$ is mapped to the real axis with the above point removed by Osgood-Caratheodory theorem \cite{noguchi1990geometric}, which we take to be the origin without loss of generality.  We assume that \eqref{orbitspacemetric} admits the global representation
\begin{equation}
q\equiv e^{2U + 2v} ( \td \rho^2 + \td z^2)
\end{equation} where $U=U(\rho,z), v = v(\rho,z)$ are smooth functions and $\rho \in [0,\infty)$ and $z \in \mathbb{R}$. The asymptotically flat end corresponds to $\rho,z \to \infty$ and the point $(\rho,z) = (0,0)$ corresponds to the second asymptotic end.  We will impose appropriate decay conditions on $(U,v)$ below.  The boundary is characterized by $\rho =0$ in this representation. The boundary segments, where a particular linear combination of Killing fields vanish, are then described by the intervals  $I_s=\{(\rho,z) | \rho=0, a_s < z < a_{s+1}\}$ where $a_1<a_2<\cdots<a_n$ are referred to as `rod points'. Asymptotic flatness requires that there are two semi-infinite rods $I_- = \{ (\rho,z)|\rho =0, -\infty < z < a_1\}$ and $I_+ = \{(\rho,z) | \rho = 0, a_n < z < \infty\}$ corresponding to the two symmetry axes of the asymptotically flat region.  Further details on the orbit space can be found in \cite{alaee2014mass}.

Now note $\det \lambda(0,z) =0$ on corner and boundary points and smoothness at fixed points requires $\det \lambda = \rho^2 + O(\rho^4)$ as $\rho \to 0$. Furthermore since $\Sigma$ is asymptotically flat, this implies $\det \lambda$ has to approach the corresponding value in Euclidean space outside a large ball (i.e. $\det \lambda \sim r^4$ as $r \to \infty$ where $r$ is a radial coordinate in $\mathbb{R}^4$).  Let $\phi^i$ be coordinates with period $2\pi$ such that the $\mathcal{L}_{\xi_i} \phi^j = \delta^j_i$.  Then $\xi_{(i)} = \partial_{\phi^i}$. 

The four-manifold $(\Sigma,h)$ may be considered as the total space of a $U(1)^2$ principal bundle over $\mathcal{B}$, where we identify the fibre metric with $\lambda_{ij}$.   We use Greek indices $\alpha,\beta=1,...,4$ to label local coordinates on $\Sigma$.  The simplest case is $\mathbb{R}^4$ with its Euclidean metric which in our coordinate system has the representation
\begin{equation}\label{flatmet}
\delta_4= \frac{\td\rho^2 + \td z^2}{2\sqrt{\rho^2 + z^2} } + (\sqrt{\rho^2 + z^2} - z) (\td\phi^1)^2 + (\sqrt{\rho^2 + z^2} + z) (\td\phi^2)^2
\end{equation}  Asymptotically flat metrics must approach $\delta_4$ with appropriate fall-off conditions. In particular we have $\det \lambda \to \rho^2$ as $\rho,z \to \infty$.  This suggests we set $ \lambda_{ij} = e^{2v} \lambda'_{ij}$ where $\det \lambda' = \rho^2$ and $v$ satisfies appropriate decay conditions at the ends and boundary conditions on the axis.  These decay conditions are most appropriately expressed in terms of new coordinates $(r,x)$ defined by 
\begin{equation}
r \equiv \left[ 2\left[\rho^2 + z^2\right]^{1/2}\right]^{1/2} \qquad x \equiv \frac{z}{\left[\rho^2 + z^2\right]^{1/2}}
\end{equation}  where $0 \leq r < \infty$ and $-1 \leq x \leq 1$. The axis $\Gamma$ now corresponds to two lines $\mathcal{I}^+ \equiv \{(r,x)| x= 1\}$ and $\mathcal{I}^- \equiv \{(r,x)| x= -1\}$ . Note that if the space has a second asymptotic end, then the point $r=0$ is removed.  In this representation, the Euclidean metric on $\mathbb{R}^4$ takes the form
\begin{equation}\label{rxcoords}
\delta_4 =  \td r^2 + r^2 \left[ \frac{\td x^2}{4(1-x^2)} + \frac{(1-x)}{2} (\td \phi^1)^2 + \frac{(1+x)}{2} (\td \phi^2)^2\right]								
\end{equation}

We consider initial data $(\Sigma,h)$ which are a natural generalization of the well-known Brill data for three-dimensional initial data sets.  Motivated by the above discussion, we define this class as follows:

\begin{Def}[Generalized Brill data]\label{GBdata}We say that an initial data set $(\Sigma,h,K,\mu,j)$ for the Einstein
equations is a Generalized Brill (GB) initial data set with local metric

\begin{equation}
h =e^{2v} \left[e^{2U}\left(\td\rho^2+\td z^2\right)+\lambda'_{ij}\left(\td\phi^i+A^i_a\td x^a\right)\left(\td\phi^j+A^j_a\td x^a\right)\right]\label{generalmetric}
\end{equation}
where $(x^1,x^2)=(\rho,z)$,  $\det\lambda'=\rho^2$ and $U=V-\frac{1}{2}\log\left(2\sqrt{\rho^2+z^2}\right)$  if it satisfies the following conditions.
\begin{enumerate}[leftmargin=*]
\item  $(\Sigma,h)$ is a simply connected Riemannian manifold and $M_{\text{end}}$ is diffeomorphic to $\mathbb{R}^4\,\backslash \overline{B_R(0)}$ where $B_R(0)$ is an open ball with large radius $R$ such that\footnote{This condition is asymptotically flatness \cite{bartnik1986mass}  for $s\geq 2$ and when we write $f=o_s(r^{l})$ it means  $\partial_{\beta_1}\cdots\partial_{\beta_p}f=o(r^{l-p})$ for $0\leq p\leq s$.}
\begin{equation}
h_{\alpha\beta}-(\delta_4)_{\alpha\beta}=o_s(r^{-1}),\quad \partial h\in L^2(M_{\text{end}})\quad \mu,j\in L^1(M_{\text{end}}),\quad K=o_{s-1}(r^{-2})\nonumber
\end{equation}
\item The second fundamental form satisfies
\begin{equation}
\mathcal{L}_{\xi_{(i)}}K=\mathcal{L}_{\xi_{(i)}}h=0\qquad \text{Tr}_{h}K=0\nonumber
\end{equation} (i.e. the data is maximal).
\item The coordinate system $(\rho,z,\phi^i)$ forms a global coordinate system\footnote{It may be possible to prove this assumption is unnecessary  (see \cite{chrusciel2008masspositivity} for the three-dimensional case)} on $\Sigma$ where $\rho \in \mathbb{R}^+ \cup \{0\}$, $z \in \mathbb{R}$, and $\phi^i$ have period $2\pi$. I  The functions $v,V,A^i_a,$ and $\lambda'_{ij}$  satisfy the following decay conditions, which are best expressed in terms of the $(r,x)$ chart given by \eqref{rxcoords}:
\begin{enumerate}[leftmargin=*]
\item \label{Asympvf} as $r\to\infty$  
\begin{gather}
v=o_{1}(r^{-1}),\quad A_{\rho}^i=\rho o_{1}(r^{-5}),\quad A_z^i=o_{1}(r^{-3}),\quad V=o_1(r^{-1})\nonumber\\
\lambda'_{ii}=\left(1+(-1)^{i-1}f_{11}r^{-1-\kappa}+o_1(r^{-2})\right)\sigma_{ii},\quad \lambda'_{12}=\rho^2o_1(r^{-5}),\nonumber
\end{gather}
where $0<\kappa\leq 1$,  $\sigma_{ij}=\frac{r^2}{2}\text{diag}\left(1-x,1+x\right)$
\item\label{End1} If $r\to 0$  represents a second asymptotically flat end we have
\begin{gather}
v=-2\log (r)+o_{1}(r),\quad A_{\rho}^i=\rho o_{1}(r),\quad A_z^i=o_{1}(r^{3}),\quad V=o_1(r)\nonumber\\
\lambda'_{ii}=\left(1+(-1)^{i-1}f_{22}r^{1+\kappa}+o_1(r^{2})\right)\sigma_{ii},\quad \lambda'_{12}=\rho^2o_1(r^{-1}),\nonumber
\end{gather}
\item\label{End2} If $r\to 0$ is a cylindrical end with topology $\mathbb{R}^+\times N$ where $N\cong S^3,S^1\times S^2, L(p,q)$ we have
\begin{gather}
v=-\log (r)+O_{1}(r^{1}),\quad A_{\rho}^i=\rho o_{1}(r),\quad A_z^i=o_{1}(r^{3}),\nonumber\\
\lambda'_{ij}-r^2\bar{\sigma}_{ij}=o_1(r^{2}),\quad
V=O_1(1)\,.\nonumber
\end{gather}
where $h^c=e^{2V}\frac{dx^2}{4(1-x^2)}+\bar{\sigma}_{ij}\td\phi^i\td\phi^j$ is a metric on $N$.
\item \label{Vaxis} as $\rho\to 0$ and $\bm{w}=w^i\frac{\partial}{\partial\phi_i}$ is the Killing vector vanishes on the rod $I_s$
\begin{gather}
\lambda'_{ij}w^j=O(\rho^2),\qquad \text{and others $\lambda'_{ij}=O(1)$}\,.\nonumber
\end{gather}
and to avoid conical singularities on the axis $\Gamma$  we have
\begin{equation}
V(z)=\frac{1}{2}\lim_{\rho\to 0}\log\left(\frac{2\sqrt{\rho^2+z^2}\lambda'_{ij}w^iw^j}{\rho^2}\right)\equiv\frac{1}{2}\log V_s,\quad z\in I_s=(a_s,a_{s+1}),\quad w^i\in\mathbb{Z}\,.\nonumber
\end{equation}
\end{enumerate}
\end{enumerate}
\end{Def} 

We remark that any sufficiently smooth, asymptotically flat metric on a simply connected 3-manifold with additional asymptotic ends obtained by removing points form $\mathbb{R}^3$ and admitting a $U(1)$ isometry can be written in the above form, with $i=1$ \cite{chrusciel2008masspositivity}.  It is natural to expect a similar result holds in the present case, up to some additional conditions.  Note that the one-forms $A^i = A^i_a \td x^a$ may be considered as a local connection on the $U(1)^2$ bundle over $\mathcal{B}$.

The initial data sets defined above encompass a large class of possible data sets, which include in particular initial data for extreme and non-extreme black rings. It proves useful to restrict attention to a subclass of data, which includes initial data for the Myers-Perry black hole.  Let a fixed GB data set have orbit space $\mathcal{B}$ with rod points $a_1 , a_2 \ldots a_n$. Via the transformation \eqref{rxcoords} these points map to $\mathcal{I}^+$ and $\mathcal{I}^-$. We arrange these points in order of increasing $r$ and denote by $b_s$, for $s=1 \ldots n'\leq n$, with $b_0=\mathcal{I}_E\equiv\{(r,x):r=0, -1\leq x\leq 1\}$ and $b_{n'+1}=\mathcal{I}_F\equiv\{(r,x):r=\infty, -1\leq x\leq 1\}$. The $\mathcal{I}_F$ is the asymptotically flat end and $\mathcal{I}_E$ is another asymptotic end or just the origin of half plan $(\rho,z)$.

\begin{Def}\label{orbitclass}The \emph{admissible set} $\Xi$ of orbit spaces is a collection of $\mathcal{B}$ such that \emph{distinct} rotational Killing fields vanish on $\Gamma \cap B_s$, where $B_s = \{ (r,x)| e^{b_{s}} \leq r \leq e^{b_{s+1}}, -1 \leq x \leq 1\}$. 
\end{Def}
\begin{remark} The regions $B_s$ correspond to annuli in the $(\rho,z)$ representation of $\mathcal{B}$ and (finite, infinite, or semi-infinite) rectangles on the $(y,x)$ representation where $y = \log r$. 
\end{remark}
\begin{remark} The geometry of a second asymptotic end of data belonging to $\Xi$ must have $N = S^3$ (or a Lens space quotient).  This follows from the classification of orbit spaces $N / U(1)^2$ obtained in  \cite{hollands2008uniqueness}  when distinct Killing fields vanish on $\mathcal{I}^+$ and $\mathcal{I}^-$. 
\end{remark}
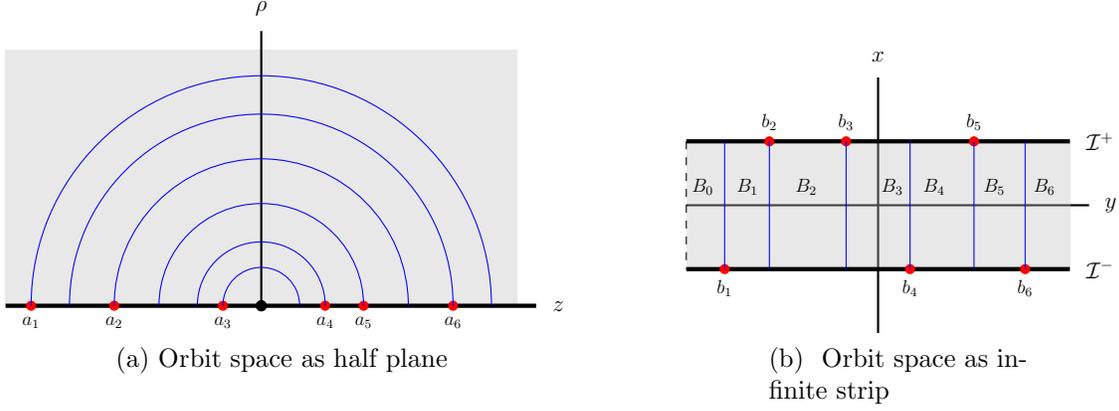
\begin{figure}[h]
\centering
\subfloat[{Orbit space as half plane}]{
\begin{tikzpicture}[scale=.85, every node/.style={scale=0.6}]
\fill[ fill=gray!50!white, opacity=0.35,very thick](-4,4)--(-4,0)--(4,0)--(4,4);
\draw[black,ultra thick](-4,0)--(4.3,0)node[black,font=\large,right=.2cm]{$z$};
\draw[black,thick](0,0)--(0,4.3)node[black,font=\large,above=.2cm]{$\rho$};
\draw[red,fill=red] (-3.6,0) circle [radius=.07] node[black,below=.1cm]{$a_1$};
\draw[red,fill=red] (-2.3,0) circle [radius=.07] node[black,below=.1cm]{$a_2$};
\draw[red,fill=red] (-.6,0) circle [radius=.07] node[black,below=.1cm]{$a_{3}$};
\draw[fill=black] (0,0) circle [radius=.08];
\draw[red,fill=red] (1,0) circle [radius=.07] node[black,below=.1cm]{$a_{4}$};
\draw[red,fill=red] (1.6,0) circle [radius=.07] node[black,below=.1cm]{$a_{5}$};
\draw[red,fill=red] (3,0) circle [radius=.07] node[black,below=.1cm]{$a_{6}$};
\draw[blue] (3,0) arc (0:180:3cm);
\draw[blue] (3.6,0) arc (0:180:3.6cm);
\draw[blue] (2.3,0) arc (0:180:2.3cm);
\draw[blue](.6,0) arc (0:180:.6cm);
\draw[blue] (1,0) arc (0:180:1cm);
\draw[blue](1.6,0) arc (0:180:1.6cm);
\end{tikzpicture}}
\hspace{3em}
\subfloat[{ Orbit space as infinite strip }]{
\begin{tikzpicture}[scale=.85, every node/.style={scale=0.6}]
\draw[black,thick](-3,0)--(3.3,0)node[black,font=\large,right=.2cm]{$y$};
\draw[black,thick](0,-2)--(0,2)node[black,font=\large,above=.2cm]{$x$};
\fill[gray!50!white, opacity=0.35](-3,1)--(3,1)--(3,-1)--(-3,-1);
\draw[black,ultra thick](-3,1)--(3,1)node[black,font=\large,right=.2cm]{$\mathcal{I}^+$};
\draw[black,ultra thick](-3,-1)--(3,-1)node[black,font=\large,right=.2cm]{$\mathcal{I}^-$};
\draw[red,fill=red] (-2.4,-1) circle [radius=.07] node[black,below=.1cm]{$b_1$};
\draw[red,fill=red] (.5,-1) circle [radius=.07] node[black,below=.1cm]{$b_4$};
\draw[red,fill=red] (2.3,-1) circle [radius=.07] node[black,below=.1cm]{$b_6$};
\draw[red,fill=red] (-1.7,1) circle [radius=.07] node[black,above=.1cm]{$b_2$};
\draw[red,fill=red] (-.5,1) circle [radius=.07] node[black,above=.1cm]{$b_3$};
\draw[red,fill=red] (1.5,1) circle [radius=.07] node[black,above=.1cm]{$b_5$};
\draw[blue](-2.4,-1)--(-2.4,1)node[black,below=1cm,left=.1cm]{$B_{0}$};
\draw[blue](.5,-1)--(.5,1)node[black,below=1cm,left=.01cm]{$B_{3}$};
\draw[blue](2.3,-1)--(2.3,1)node[black,below=1cm,left=.3cm]{$B_{5}$}node[black,below=1cm,right=.04cm]{$B_{6}$};
\draw[dashed](-3,1)--(-3,-1);
\draw[blue](-1.7,-1)--(-1.7,1)node[black,below=1cm,left=.1cm]{$B_{1}$};
\draw[blue](-.5,-1)--(-.5,1)node[black,below=1cm,left=.5cm]{$B_{2}$};
\draw[blue](1.5,-1)--(1.5,1)node[black,below=1cm,left=.5cm]{$B_{4}$};
\end{tikzpicture}}
\caption{The orbit space can be subdivided into subregions $B_s$ which are half-annuli in the $(\rho,z)$ plane and  rectangles in the $(y,x)=(\log r,x)$ plane. In this case $n=6$. The dashed line $\mathcal{I}_E$ can represent origin or in the case of black holes is another asymptotic end.}\label{fig5}
\end{figure} 
\noindent The ADM energy\footnote{We will refer to this as the `mass' hereafter.} and momenta for a generalized Brill data set $(\Sigma,h,K,\mu,j)$ are given by
\begin{equation}\label{ADMmass}
m=\frac{1}{16\pi}\lim_{r\rightarrow\infty}\int_{S^3_r}\left(\partial_{\alpha}h_{\alpha\beta}-\partial_{\beta}h_{\alpha\alpha}\right)n^{\beta}\, ds,\qquad\quad j_{\alpha}=\frac{1}{8\pi}\lim_{r\rightarrow\infty}\int_{S^3_r}K_{\alpha\beta}n^{\beta}\, ds
\end{equation}
 where $S^3_r$ refers to a three-sphere of coordinate radius $r$ with volume element $ds=\frac{r^3}{4}\td x\td \phi^1\td\phi^2$ in the Euclidean chart outside a large compact region and $n$ is the unit normal. Then we have the following positive mass theorem.

\begin{thm}[Positive Mass Theorem for $U(1)^2$-invariant data]\label{positive mass}
Consider a GB initial data set $(\Sigma,h,K,\mu,j)$. Then if $R_{h}\geq 0$ and $\mathcal{B}\in \Xi$ where $\Xi$ is defined in Definition \ref{orbitclass}, then 
\begin{equation}
0\leq m\leq \infty\,.
\end{equation}
Moreover, we have $m<\infty$ if and only if we have 
\begin{equation}
R_{h},\rho^{-2}\det\nabla\lambda'\in L^1(\mathcal{B}),\quad V\in L^1(\mathbb{R}^+),\quad (A^i_{\rho,z}-A^i_{z,\rho}),v\in L^2(\mathcal{B})\label{bound}
\end{equation}Finally, 
 $m=0$ if and only if $h$ is the Euclidean metric and $\Sigma=\mathbb{R}^4$. 
\end{thm}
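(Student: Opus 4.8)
The plan is to reduce the ADM mass, defined in \eqref{ADMmass} as a flux integral over the sphere at infinity, to a bulk integral over the two-dimensional orbit space $\mathcal{B}$ of a sum of manifestly non-negative terms, with the four integrability conditions in \eqref{bound} appearing precisely as the statement that each such term is integrable. First I would compute the scalar curvature $R_h$ of the generalized Brill metric \eqref{generalmetric}. Regarding $h$ as a $U(1)^2$ fibration over $\mathcal{B}$ with fibre metric $\lambda_{ij}=e^{2v}\lambda'_{ij}$, connection one-forms $A^i$, and base conformal factor $e^{2U+2v}$, the expression for $R_h$ organizes into a flat Laplacian acting on a combination of $v$ and $V$, plus terms quadratic in $\nabla v$, in the curvatures $F^i=\td A^i$, and in $\nabla\lambda'$. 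The structural point I would establish is that $R_h\,e^{2U+2v}$, that is $R_h$ promoted to a density for the flat measure $\td\rho\,\td z$ on the half-plane, equals such a flat Laplacian minus a sum of squares.

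Next I would integrate this identity over the half-plane $\{\rho\geq 0\}$ against $\td\rho\,\td z$. The flat-Laplacian term integrates, by the divergence theorem, to boundary contributions at spatial infinity ($r\to\infty$), along the axis $\Gamma=\{\rho=0\}$, and encircling the second end ($r\to 0$). Using the decay at infinity from Definition \ref{GBdata} one identifies the flux at infinity with $16\pi m$, while the surviving bulk integrand is $-R_h$ together with the non-negative quantities $|\nabla v|^2$, the field-strength energy $\sim(A^i_{\rho,z}-A^i_{z,\rho})^2$, and the harmonic-map energy of the fibre metric, which---since $\det\lambda'=\rho^2$ is fixed---I expect to take the form of a non-negative $SL(2,\mathbb{R})/SO(2)$ sigma-model density proportional to $\rho^{-2}\det\nabla\lambda'$. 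Rearranging yields $16\pi m=\int_{\mathcal{B}}(\text{non-negative integrand})+(\text{axis and second-end boundary terms})$, with the four terms in the integrand matching exactly those whose integrability is asserted in \eqref{bound}.

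The main obstacle is controlling the remaining boundary terms. At the second end I would invoke the fall-off of Definition \ref{GBdata}: in the asymptotically flat case the flux vanishes as $r\to 0$, while in the cylindrical case the behaviour $v=-\log r + O_1(r)$, $V=O_1(1)$ must be checked to give no net contribution. The delicate part is the axis, where integrating along $\rho=0$ produces a term built from $V$ and from the limiting behaviour of $\lambda'$; the no-conical-singularity condition of Definition \ref{GBdata} fixes $V|_{I_s}=\tfrac12\log V_s$ on each rod $I_s$. I would show rod by rod that this reduces to a contribution controlled by $V\in L^1(\mathbb{R}^+)$, and that the potential jumps at the rod points $a_s$ (the corners of $\mathcal{B}$, where $\det\lambda$ vanishes to zeroth order) cancel. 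This cancellation is exactly where the admissibility hypothesis $\mathcal{B}\in\Xi$ of Definition \ref{orbitclass} should enter: requiring \emph{distinct} rotational Killing fields to vanish on $\Gamma\cap B_s$ constrains adjacent rod vectors through $\det(\bm{v}_s,\bm{v}_{s+1})=\pm1$ in such a way that the corner contributions telescope and no negative boundary term survives.

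With the boundary terms disposed of, positivity is immediate: $16\pi m$ equals the integral of a non-negative density, so $0\leq m\leq\infty$. Finiteness holds if and only if that integral converges, i.e.\ if and only if each of $R_h$, $\rho^{-2}\det\nabla\lambda'$, $(A^i_{\rho,z}-A^i_{z,\rho})$, $v$ and $V$ lies in the stated $L^1$ or $L^2$ space, which is precisely \eqref{bound}. For rigidity, $m=0$ together with a non-negative integrand forces every term to vanish pointwise: $R_h=0$, $F^i=0$, $\nabla v\equiv 0$, the sigma-model energy of $\lambda'$ vanishes, and $V\equiv 0$. Feeding in the decay conditions to kill the additive constants then gives $v\equiv 0$, $A^i$ pure gauge and hence removable, $\lambda'=\sigma$, and $U=-\tfrac12\log(2\sqrt{\rho^2+z^2})$, so that $h$ collapses to the flat metric \eqref{flatmet}; the vanishing of $V$ on the axis excludes a second end and forces $\Sigma\cong\mathbb{R}^4$.
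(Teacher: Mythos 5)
Your overall strategy---reduce the ADM mass to a bulk integral over $\mathcal{B}$ plus boundary terms, using the scalar curvature identity for the $U(1)^2$ fibration---is the same skeleton as the paper's (the paper gets there via a conformal split $h=e^{2v}\tilde h$, a separate lemma computing $m_{\tilde h}$, and Stokes applied to the one-forms $\chi_1,\chi_2$, but that is a difference of bookkeeping, not of substance). The genuine gap is your claim that the resulting integrand is \emph{manifestly} non-negative, in particular that the fibre-metric contribution is ``a non-negative $SL(2,\mathbb{R})/SO(2)$ sigma-model density proportional to $\rho^{-2}\det\nabla\lambda'$.'' It is not: the term that actually appears is $-\frac{\det\nabla\lambda'}{2\rho^2}$, and with the gauge $\det\lambda'=\rho^2$ this quantity has no sign. (Take $\lambda'_{12}=0$, $\lambda'_{11}=\rho e^{W}$, $\lambda'_{22}=\rho e^{-W}$: then $-\rho^{-2}\det\nabla\lambda'=|\nabla W|^2-\rho^{-2}$, which is negative wherever $W$ is slowly varying.) Establishing positivity of $-\frac{\det\nabla\lambda'}{2\rho^2}+6|\nabla v|^2$ \emph{together with} the axis term $\frac{\pi}{2}\int_0^\infty r[V(1)+V(-1)]\,\td r$ is the entire content of the theorem, and it is done annulus by annulus: on each $B_s$ one passes to the basis of Killing fields adapted to the rods bounding $B_s$ and parameterizes $(\lambda',v)$ by functions $V^s_1,V^s_2,W^s$ as in \eqref{lambdaGH}, after which the integrand becomes the explicit (and only then non-negative) quadratic form \eqref{positive}. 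This is also where the hypothesis $\mathcal{B}\in\Xi$ is used---\emph{distinct} Killing fields must degenerate on $\mathcal{I}^+\cap B_s$ and $\mathcal{I}^-\cap B_s$ for that parameterization to exist on all of $B_s$---not, as you suggest, in a telescoping cancellation of corner contributions along the axis.

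The same omission propagates into your rigidity argument. Vanishing of the (correctly organized) integrand gives only that $V^s_1,V^s_2,W^s$ are \emph{constant on each annulus}; to conclude $\lambda'=\sigma$ and $\Sigma=\mathbb{R}^4$ one must still match these constants across adjacent annuli via the integer change of basis \eqref{changeofcoordinate} and derive a contradiction unless the rod structure is trivial ($n'=1$), and $V\equiv 0$ does not come for free from the bulk integrand (it has no $|\nabla V|^2$ term) but must be extracted afterwards from $\Delta_2V=0$ with vanishing boundary data via the maximum principle. Your sketch of the boundary analysis at the second end and along the axis is otherwise consistent with the paper, but without the annulus-adapted reparameterization the positivity claim at the heart of the proof does not close.
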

\begin{proof}
Consider the GB data $(\Sigma,h,K,\mu,j)$. We can write the metric in conformal form as 
\begin{equation}\label{conformal}
\tilde{h}_{\alpha\beta}=\Phi^{-2}h_{\alpha\beta}\,.
\end{equation}
where $\Phi=e^{v}$.  Then by the asymptotic decay properties of GB data at the asymptotically flat end we have  
\begin{eqnarray}
\Phi-1=o(r^{-1}),\qquad \tilde{h}_{\alpha\beta}=\delta_{\alpha\beta}+o(r^{-1})\,.\label{fall2}
\end{eqnarray}
Then the integrand in the expression for the ADM mass  \eqref{ADMmass} is 
\begin{eqnarray}
\partial_{\alpha} h_{\alpha\beta}-\partial_{\beta} h_{\alpha\alpha}&=&-6\partial_{\beta}\Phi+\partial_\alpha\tilde{h}_{\alpha\beta}-\partial_\beta\tilde{h}_{\alpha\alpha}+o(r^{-3}).
\end{eqnarray}
Therefore we find 
\begin{eqnarray}\label{ADMmassofh}
m&=&\frac{1}{16\pi}\lim_{r\rightarrow\infty}\int_{S_r}-6\partial_c\Phi\, n^c\, ds+m_{\tilde{{h}}}\nonumber\\
&=&\frac{1}{64\pi}\lim_{r\rightarrow\infty}\int_{S_r}-6v_{,r}\, r^3\td x\td\phi^1\td\phi^2+m_{\tilde{{h}}}\\
&=&-\frac{3\pi}{8}\int_{\mathcal{I}_F}v_{,r}\, dx+m_{\tilde{{h}}}
\end{eqnarray}
where  we used $\Phi=e^{v}=1+o_1(r^{-1})$ as $r\to\infty$ in first equality. The second equality follows from $U(1)^2$-invariant symmetry of $v$ and definition of $\mathcal{I}_F=\{(r,x):r=\infty, -1\leq x\leq 1\}$. Now we find the ADM mass of the conformal metric $\tilde{h}$.
\begin{lemma}
Consider a GB data $(\Sigma,{h},K,\mu,j)$ with the rescaling \eqref{conformal}. Then
\begin{equation}
m_{\tilde{{h}}}=-\frac{\pi}{4}\int_{\mathcal{I}_F}\left(\frac{r^3}{2}V_{,r}-r^2V\right)\td x
\end{equation} 
\end{lemma}
\begin{proof}
Consider the flat metric in Cartesian coordinates $(y_i)$ 
\begin{equation}
\delta_4=\td y_1^2+\td y_2^2+\td y_3^2+\td y_4^2.
\end{equation}
with the transformation to $(r,x,\phi^1,\phi^2)$ (equivalently $(\rho=\frac{r^2}{2}\sqrt{1-x^2},z=\frac{r^2}{2}x,\phi^1,\phi^2)$) for GB conformal metric with transformation
\begin{eqnarray}
y_1&=&r\sqrt{\frac{1+x}{2}}\cos\phi^1\quad y_2=r\sqrt{\frac{1+x}{2}}\sin\phi^1,\\ y_3&=&r\sqrt{\frac{1-x}{2}}\cos\phi^2,\quad y_4=r\sqrt{\frac{1-x}{2}}\sin\phi^2,\nonumber
\end{eqnarray}
First we write the conformal metric in the $(r,x,\phi^1,\phi^2)$ chart:
\begin{eqnarray}
\qquad\quad\tilde{{h}}&=&\delta_4+\underbrace{\left(e^{2V}-1\right)\left(\td r^2+\frac{r^2}{4(1-x^2)}\td x^2\right)}_{B_I}+\underbrace{\left(\lambda'_{ij}-\sigma_{ij}\right)\td\phi^i\td\phi^i}_{B_{II}}+2\underbrace{\lambda'_{ij}A^i_a\td x^a\td\phi^j}_{B_{III}}\label{conformalexpansion}\\
&+&\text{terms quadratic in $A^i_a$}.\nonumber
\end{eqnarray}
The mass of $\delta_4$ is zero. By part 2 in Definition \ref{GBdata}, the last quadratic terms in \eqref{conformalexpansion} does not contribute to the mass integral. Now we compute the mass of the terms $B_I, B_{II}$, and $B_{III}$. By asymptotic behaviour of functions (part (3) of Definition \ref{GBdata}) we have
\begin{eqnarray}
B_I+B_{II}&=&\left(e^{2V}-1\right)\delta_4+\frac{1}{2}\left[\frac{f_{11}}{r^2}-\left(e^{2V}-1\right)\right]r^2(1+x)\left(\td\phi^1\right)^2\\
&+&\frac{1}{2}\left[-\frac{f_{11}}{r^2}-\left(e^{2V}-1\right)\right]r^2(1-x)\left(\td\phi^2\right)^2\nonumber\\
&=&\underbrace{\left(e^{2V}-1\right)}_{C_I}\delta_4+\underbrace{\left[\frac{f_{11}}{r^2}-\left(e^{2V}-1\right)\right]\frac{(y_1\td y_2-y_2\td y_1)^2}{y_1^2+y_2^2}}_{C_{II}}\nonumber\\
&+&\underbrace{\left[-\frac{f_{11}}{r^2}-\left(e^{2V}-1\right)\right]\frac{(y_4\td y_3-y_3\td y_4)^2}{y_3^2+y_4^2}}_{C_{III}}.\nonumber
\end{eqnarray} 
We compute the ADM mass of each one of these terms :
\begin{itemize}[leftmargin=*]
\item $C_{I}:$ This is  a conformally flat metric and by \eqref{ADMmassofh} we obtain
\begin{equation}
m^{C_I}=\frac{1}{16\pi}\lim_{r\rightarrow\infty}\int_{S_r}-3\partial_r\left(e^{2V}-1\right)\,\td s.
\end{equation}
\item $C_{II}:$ we consider $C_{II}$ as a metric $(C_{II})_{ab}$ such that  the only nonzero components are
\begin{equation}
(C_{II})_{ab}=\frac{1}{y_1^2+y_2^2}\left[\frac{f_{11}}{r^2}-\left(e^{2V}-1\right)\right]\left(y_2^2\td y_1^2+y_1^2\td y_2^2-2y_1y_2\td y_1\td y_2\right).
\end{equation}
\noindent Then by definition of ADM mass \eqref{ADMmass} we obtain 
\begin{eqnarray}
m^{C_{II}}\noindent&=&\frac{1}{16\pi}\lim_{r\rightarrow\infty}\int_{S_r}\left[\partial_{y_1}(C_{II})_{y_1y_2}\frac{y_2}{r}+\partial_{y_2}(C_{II})_{y_1y_2}\frac{y_1}{r} + \partial_{y_1}(C_{II})_{y_1y_1}\frac{y_1}{r} \right.\nonumber \\
 &+&\left. \partial_{y_2}(C_{II})_{y_2y_2}\frac{y_2}{r} -  \partial_{y_i}(C_{II})_{y_1y_1}\frac{y_i}{r}  -\partial_{y_i}(C_{II})_{y_2y_2}\frac{y_i}{r}\right]\,\td s\nonumber\\ 
&=&-\frac{1}{16\pi}\lim_{r\rightarrow\infty}\int_{S_r}\left\{\partial_r\left[\frac{f_{11}}{r^2}-\left(e^{2V}-1\right)\right]+\frac{1}{r}\left[\frac{f_{11}}{r^2}-\left(e^{2V}-1\right)\right]\right\}\,\td s.
\end{eqnarray}
\item $C_{III}:$ This is similar to $C_{II}$ and we have
\begin{equation}
m^{C_{III}}=-\frac{1}{16\pi}\lim_{r\rightarrow\infty}\int_{S_r}\left\{\partial_r\left[-\frac{f_{11}}{r^2}-\left(e^{2V}-1\right)\right]+\frac{1}{r}\left[-\frac{f_{11}}{r^2}-\left(e^{2V}-1\right)\right]\right\}\,\td s.
\end{equation}
\end{itemize}
Hence the ADM mass of $B_{I}+B_{II}$ is
\begin{eqnarray}
m^{B_{I}+B_{II}}&=&\frac{1}{16\pi}\lim_{r\rightarrow\infty}\int_{S_r}\left\{-\partial_r\left(e^{2V}-1\right)+\frac{2}{r}\left(e^{2V}-1\right)\right\}\,\td s\\
&=&\frac{1}{16\pi}\lim_{r\rightarrow\infty}\int_{S_r}\left\{-\partial_r\left(e^{2V}-1\right)+\frac{2}{r}\left(e^{2V}-1\right)\right\}\,\frac{r^3}{4}\td x\td \phi^1\td\phi^2 \\ 
&=&\frac{\pi}{4}\lim_{r\rightarrow\infty}\int_{S_r}\left\{-2V_{,r}+\frac{4V}{r}\right\}\,\frac{r^3}{4}\td x=-\frac{\pi}{4}\int_{\mathcal{I}_F}\left(\frac{r^3}{2}V_{,r}-r^2V\right)\td x.\nonumber
\end{eqnarray}
where in the second line we used part (3)-a of Definition \ref{GBdata}. We consider the term $B_{III}$ 
\begin{equation}
B_{III}=\underbrace{\frac{1}{2}r^2(1+x)\td\phi^1\left(A^1_\rho\td\rho+A^1_z\td z\right)}_{D_I+D_{II}}+\underbrace{\frac{1}{2}r^2(1-x)\td\phi^2\left(A^2_\rho\td\rho+A^2_z\td z\right)}_{D_{III}+D_{IV}}+o(r^{-3}).
\end{equation}
We prove ADM mass of the $D_I$ and $D_{II}$ parts are zero and the argument for the other terms are similar.  As in the argument used for $C_{II}$ and $C_{III}$, we consider $D_I$ as the following metric
\begin{eqnarray}
(D_I)_{ab}&=&\frac{1}{2}r^2(1+x)\td\phi^1A^1_\rho\td\rho=\left(y_1\td y_2-y_2\td y_1\right)A^1_\rho\td\sqrt{\left(y_1^2+y_2^2\right)\left(y_3^2+y_4^2\right)}\\
&=&\frac{A^1_\rho}{\rho}(y_3^2+y_4^2)(y_1\td y_2-y_2\td y_1)(y_1\td y_1+y_2\td y_2)\nonumber\\
&+&\frac{A^1_\rho}{\rho}(y_1^2+y_1^2)(y_1\td y_2-y_2\td y_1)(y_3\td y_3+y_3\td y_3).\nonumber
\end{eqnarray}
Then the integrand appearing in the ADM mass expression is 
\begin{eqnarray}
\left(\partial_a(D_I)_{ac}-\partial_c(D_I)_{aa}\right)n^c=\rho\underbrace{(y_1\partial_{y_2}-y_2\partial_{y_1})A^1_{\rho}}_{=0}=0.
\end{eqnarray}
Now consider $D_{II}$ as a metric
\begin{eqnarray}
(D_{II})_{ab}&=&\frac{1}{2}r^2(1+x)\td\phi^1A^1_z\td z=\frac{1}{2}\left(y_1\td y_2-y_2\td y_1\right)A^1_z\td\left[\left(y_1^2+y_2^2\right)-\left(y_3^2+y_4^2\right)\right]\nonumber\\
&=&\frac{A^1_z}{z}(y_1\td y_2-y_2\td y_1)(y_1\td y_1+y_2\td y_2)-\frac{A^1_z}{z}(y_1\td y_2-y_2\td y_1)(y_3\td y_3+y_3\td y_3).\nonumber
\end{eqnarray}
Then the ADM mass is 
\begin{eqnarray}
\left(\partial_a{D_{II}}_{ac}-\partial_c(D_{II})_{aa}\right)n^c=\frac{z}{2}\underbrace{(y_1\partial_{y_2}-y_2\partial_{y_1})A^1_{z}}_{=0}=0.
\end{eqnarray}
Therefore, the ADM mass of the conformal metric is zero, that is $m_{\tilde{h}}=0$.
\end{proof}

 Returning to the  mass of GB data  we have
\begin{eqnarray}
m&=& \frac{\pi}{4}\int_{\mathcal{I}_F}\left[-\frac{3}{2}v_{,r}-\left(\frac{r^3}{2}V_{,r}-r^2V\right)\right]\td x\,,\label{massonx}
\end{eqnarray}
Then we define three one-form $\omega$, $\chi_1$ and $\chi_2$
\begin{equation}
\omega=2\chi_1+6\chi_2
\end{equation}
where 
\begin{eqnarray}
\chi_1&=&(\rho V_{,\rho}-V)\td z-\rho V_{,z}\td\rho\\
&=&\left(-r(1-x^2)V_{,x}-rxV\right)\td r+\left(\frac{r^3}{4}V_{,r}-\frac{r^2}{2}V\right)\td x\\
\chi_2&=&\rho \left(v_{,\rho}\td z-v_{,z}\td\rho\right)=-r(1-x^2)v_{,x}\td r+\frac{r^3}{4}v_{,r}\td x
\end{eqnarray}
Then 
\begin{equation}
\td\chi_1=\Delta_2V\rho\td\rho\td z,\quad\td\chi_2=\Delta_3v\rho\td\rho\td z,\quad \td \omega=\left(2\Delta_2V+6\Delta_3v\right)\rho\td\rho\td z,\label{omegafomr}
\end{equation}
where $\Delta_3$ is Laplace operator respect to $\delta=\td\rho^2+\td z^2+\rho^2\td\phi^2$ be metric on $\mathbb{R}^3$ and $\Delta_2=\partial^2_{\rho}+\partial^2_{z}$. Now by asymptotes of GB data set, we list the behaviour of $\chi_1$ and $\chi_2$ at boundary of the orbit space $\partial\mathcal{B}=\Gamma\cup\mathcal{I}_F\cup\mathcal{I}_E$ where $\Gamma=\mathcal{I}^+\cup\mathcal{I}^-$.
\begin{eqnarray}
\chi_1&=& \left(\frac{r^3}{4}V_{,r}-\frac{r^2}{2}V\right)\td x,\quad \chi_2=\frac{r^3}{4}v_{,r}\td x,\qquad \text{on $\mathcal{I}_F$}\\
\chi_1&=&-rxV\td r,\quad \chi_2=0,\qquad \text{on $\Gamma$}\\
\chi_1&=&\chi_2=0,\qquad \text{on $\mathcal{I}_E$},
\end{eqnarray}
Now if we integrate equation\eqref{omegafomr} with coefficient $\frac{\pi}{4}$ over the orbit space $\mathcal{B}$ we have
\begin{eqnarray}
\frac{\pi}{4}\int_{\mathcal{B}}\td\omega&=&\frac{\pi}{4}\int_{\partial\mathcal{B}}\omega\label{ADMmass1}\\
&=&-\frac{\pi}{4}\int_{\Gamma}rxV\td r+\frac{\pi}{2}\int_{\mathcal{I}_F}\left[\left(\frac{r^3}{2}V_{,r}-r^2V\right)+\frac{3r^3}{2}v_{,r}\right]\td x\nonumber\\
&=&\frac{\pi}{2}\int_{0}^{\infty}r\left[V(x=1)+V(x=-1)\right]\td r-m\nonumber
\end{eqnarray}
The first equality follows from Stokes theorem and the last equality follows from equation \eqref{massonx} and orientation of $(r,x)$ chart.
We next compute the scalar curvature of $\tilde{h}_{\alpha\beta}$. After a  conformal rescaling we have
\begin{equation}
-R_{h}e^{2v}=-R_{\tilde{h}}+6e^{-2U}\left[\Delta_{3}v+\abs{\nabla v}^2\right]\,.\label{curv1}
\end{equation}
where $\nabla$ is the derivative with respect to $\delta_{ab}$ and $R_{\tilde{h}}$ is Ricci scalar of ${\tilde{h}}$. Now similar to the calculation in \cite{harmark2009domain} we compute\footnote{There is a typo in equation A.1 of \cite{harmark2009domain} and the correct expression yields \eqref{ricci} for $\textrm{Ric}(\tilde{h})$} the Ricci tensor of $\tilde{h}_{\alpha\beta}$:
\begin{eqnarray}
\qquad\tilde{R}_{ij}&=& -\frac{1}{2}\nabla_a\nabla^a\lambda'_{ij}-\frac{1}{2}\nabla_a\left(\log\rho\right)\nabla^a\lambda'_{ij}+\frac{1}{2}\nabla^a\lambda'_{ik}\lambda'^{kl}\nabla_a\lambda'_{lj}+\frac{1}{4}e^{-4U}\lambda'_{ik}\lambda'_{jl}H^{kl}\label{ricci}\\
\tilde{R}_{ia}&=&\tilde{R}_{ij}A^{j}_a+\frac{1}{2\rho}\delta_{ab}\nabla_c\left(\rho e^{-2U}\lambda'_{ij}\delta^{bd}\delta^{ce}F^j_{de}\right)\\
\tilde{R}_{ab}&=&-\tilde{R}_{ij}A^{i}_aA^{j}_b+\tilde{R}_{ia}A^i_b+\tilde{R}_{ib}A^i_a-\frac{1}{2}e^{-2U}\delta^{cd}\lambda'_{ij}F^{i}_{ac}F^{j}_{bd}+{}^2\tilde{R}_{ab}-D_aD_b\log\rho \\
&-&\frac{1}{4}\text{Tr}\left[\lambda'^{-1}\nabla_a\lambda'\lambda'^{-1}\nabla_b\lambda'\right]\nonumber
\end{eqnarray}
where $F^i_{ab}\equiv 2 \nabla_{[a}A^{i}_{b]}$,  $H^{ij}\equiv\delta^{ac}\delta^{bd}F^i_{ab}F^j_{cd}$. Here  $D_a$ and ${}^2\tilde{R}_{ab}$ are the Levi-Civita connection and Ricci tensor with respect to $q_{ab}=e^{2U}\delta_{ab}$. Then the scalar curvature is
\begin{equation}
{R}_{\tilde{h}}e^{2U}=-\frac{1}{4}e^{-2U}\lambda'_{ij}H^{ij}-2\Delta_2U+\frac{\det\nabla\lambda'}{2\rho^2}\label{curvconformal}
\end{equation}
By equations \eqref{curv1} and \eqref{curvconformal} we have
\begin{equation}
-R_{h}e^{2v+2U}=\frac{1}{4}e^{-2U}\lambda'_{ij}H^{ij}+2\Delta_2V-\frac{\det\nabla\lambda'}{2\rho^2}+6\Delta_{3}v+6\abs{\nabla v}^2\label{scalarcurv}
\end{equation}
where $\Delta_2U=\Delta_2V$ by definition of $U$. Now we integrate equation \eqref{scalarcurv} over $\mathcal{B}$ and use \eqref{ADMmass1}
\begin{eqnarray}
\quad m&=&\frac{\pi}{4}\int_{\mathcal{B}}\Bigg[R_he^{2v+2U}+\frac{1}{4}e^{-2U}\lambda'_{ij}H^{ij}-\frac{\det\nabla\lambda'}{2\rho^2}+6\abs{\nabla v}^2\Bigg]\, \td\mu\nonumber\\
&+&\frac{\pi}{2}\int_{0}^{\infty}r\left[V(x=1)+V(x=-1)\right]\td r\label{ADMGB}\\
&\geq&\frac{\pi}{4}\int_{\mathcal{B}}\Bigg[-\frac{\det\nabla\lambda'}{2\rho^2}+6\abs{\nabla v}^2\Bigg]\, \td\mu+\frac{\pi}{2}\int_{0}^{\infty}r\left[V(x=1)+V(x=-1)\right]\td r\label{positivepart}
\end{eqnarray}
The inequality follows from $H^{ij},R_h\geq 0$. Now we use the argument of Section 5 of  \cite{alaee2014mass} to establish positivity of $m$ over each annulus $B_s$. Fix $B_s$ and without loss of generality we can select the following parameterization of the 3 independent functions contained in $\lambda'_{ij}$ and $v$:
\begin{eqnarray}
\begin{aligned}\label{lambdaGH}
\lambda'_{11}&=\frac{r^2(1-x)}{2\sqrt{1-(W^s)^{2}}}e^{V^s_1-V^s_2}\qquad
\lambda'_{22}=\frac{r^2(1+x)}{2\sqrt{1-(W^s)^2}}e^{V^s_2-V^s_1},\\
\lambda'_{12}&=\frac{r^2\sqrt{1-x^2}W^s}{2\sqrt{1-(W^s)^2}}\qquad v=\frac{V^s_1+V^s_2+\log\sqrt{1-(W^s)^2}}{2}\,.
\end{aligned} 
\end{eqnarray} where $\bm{v}_s = \partial_{\bar{\phi}^1_s}$ and  $\bm{w}_s = \partial_{\bar{\phi}^2_s}$  vanish on $\mathcal{I}^+\cap B_s$ and $\mathcal{I}^-\cap B_s$, respectively such that
\begin{equation}
\frac{\partial}{\partial_{\bar{\phi}^k_s}}=\alpha^j_{sk}\frac{\partial}{\partial_{{\phi}^j}},\qquad k,j=1,2,\quad s=1,\ldots,n',\qquad \alpha^j_{sk}\in\mathbb{Z}\,.\label{changeofcoordinate}
\end{equation}
where for fixed $s$ we have $\det(\alpha^j_{sk})=\det\begin{pmatrix} \alpha^1_{s1} & \alpha^2_{s1}\\ \alpha^1_{s2} & \alpha^2_{s2}\end{pmatrix}=\pm 1$ \cite{hollands2008uniqueness}.  Recall that this relation must hold between two bases that generate the $U(1)^2$ action.  The functions $V^s_1,V^s_2$ and $W^s$ are $C^1$ functions whose boundary conditions on the axis are induced from those of $\lambda'_{ij}$ and $v$ in Definition \ref{GBdata}.  In particular, we have $\det\lambda'=\rho^2$ and to remove conical singularities on $\mathcal{I}^{\pm}$  by Definition \ref{GBdata}-\eqref{Vaxis} we require:
\begin{equation}
2V-V^s_1+V^s_2= 0 \quad \text{on $\mathcal{I}^{+}$},\quad 2V-V^s_2+V^s_1= 0 \quad \text{on $\mathcal{I}^{-}$},\quad W^s=0\quad \text{on $\mathcal{I}^{\pm}$}\label{GR1}
\end{equation} Note that since $\lambda'_{ij}$ and $v$ are continuous across the boundary of $B_s$, this will impose boundary conditions on the parameterization functions in adjacent subregions. Then we have
{\small\begin{eqnarray}
\quad m_s&\geq&\frac{\pi}{16}\int_{B_s}\Bigg(\abs{\nabla V^s_1+\nabla V^s_2}^2+\abs{\nabla V^s_1}^2+\frac{\abs{\nabla W^s}^2}{2(1-(W^s)^2)}+\frac{2(W^s)^2\abs{\nabla W^s}^2}{(1-(W^s)^2)^2}\label{positive}\\
&+&\frac{(W^s)^2}{2(1-(W^s)^2)}\left[\abs{\nabla V^s_1-\nabla V^s_2}^2-\frac{6}{W^s}(\nabla V^s_1\cdot\nabla W^s+\nabla V^s_2\cdot\nabla W^s)\right]+\abs{\nabla V^s_2}^2\nonumber\\
&+&\frac{(W^s)^2}{r^2(1-(W^s)^2)}\left[4\partial_xV^s_2-4\partial_xV^s_1+\frac{2}{(1-x^2)}\right]\Bigg)\,r^3 \td x\td r\geq 0\,.\nonumber
\end{eqnarray}} The final inequality follows from \cite{alaee2014mass,alaee2014thesis} (see also \cite{gibbons2006positive}).  The total ADM mass $m$ is simply the sum of $m_s$ and is hence non-negative. 

For the second part of the Theorem \ref{positive mass} it is obvious from \eqref{ADMGB} that  $m<\infty$ if and only if \eqref{bound} holds. Now if we assume $h$ is the Euclidean metric on $\Sigma=\mathbb{R}^4$ , clearly $m=0$. Conversely, If $m=0$, then by \eqref{ADMGB} we have
\begin{equation}
R_{{h}}=A^i_{\rho,z}-A^i_{z,\rho}=0\label{Airhoz}
\end{equation}
Now we need to show $v=0$, $V=0$, and $\lambda'_{ij}=\sigma_{ij}=\frac{r^2}{2}\text{diag}(1+x,1-x)$. We prove it by the technique we used to prove positivity of $m$ in each $B_s$. Fix $B_s$ and a parametrization \eqref{lambdaGH}. Then by  \eqref{positive}  we have
\begin{equation}
\nabla V^s_1=\nabla V^s_2=\nabla W^s=0\,.\label{VWzero}
\end{equation}  To show this, one should expand the derivatives with respect to $r$ and $x$ and use an argument similar to that given in \cite{gibbons2006positive,alaee2014thesis}. The details are straightforward but tedious. Since $W^s=0$ on $\mathcal{I}^{\pm}$, we have $W^s\equiv 0$. Also by equations \eqref{lambdaGH} and \eqref{VWzero}, we have $\nabla v=0$ and by Definition \ref{GBdata}, $v$ vanishes at infinity. This implies $v\equiv 0$. Note that in particular this implies there could not be another asymptotic end as $r\to 0$, since $v \propto -\log r$ in that case. Moreover, by definition of $v$ in the parametrization \eqref{lambdaGH} and $v=0$, we have $V^s_1=-V^s_2=$constant. This means for each $B_s$ we have 
\begin{eqnarray}
\begin{aligned}\label{metB_s}
\lambda'_{kk}&=\frac{r^2(1-x)}{2}e^{2V^s_1}\qquad
\lambda'_{jj}=\frac{r^2(1+x)}{2}e^{-2V^s_1},\qquad
\lambda'_{12}=0\qquad v=0\,.
\end{aligned} 
\end{eqnarray}
where $k\neq j$ and $k,j=1,2$. If we consider the last annulus $B_{n'}$ which extends to spatial infinity, i.e. $\mathcal{I}_F$, then by the asymptotic conditions of $\lambda'_{ij}$ in Definition \ref{GBdata} and  $\nabla V^{n'}_1=0$, we obtain $V^{n'}_1=V^{n'}_2\equiv 0$. Moreover, if we consider the common boundary of $B_{n'-1}$ and $B_{n'}$, by the continuity of $V^s_1$ through boundary of $B_s$ and \eqref{changeofcoordinate}, we have
\begin{equation}
4V^{n'-1}_1=\pm\log\left(\frac{\alpha_{(n'-1)1}^k\sigma_{kl}\alpha_{(n'-1)1}^{tl}}{\alpha_{(n'-1)2}^k\sigma_{kl}\alpha_{(n'-1)2}^{tl}}\right)+\log\left(\frac{1+x}{1-x}\right),\quad 0=\alpha_{(n'-1)1}^k\sigma_{kl}\alpha_{(n'-1)2}^{tl}
\end{equation}
where for fixed $k$, $\alpha_{(n'-1)k}^l=(\alpha_{(n'-1)k}^1,\alpha_{(n'-1)k}^2)$ and $\alpha_{(n'-1)k}^{tl}=(\alpha_{(n'-1)k}^1,\alpha_{(n'-1)k}^2)^t$.  These conditions arise by expressing $\lambda'_{ij}$ in $B_{n'-1}$ \eqref{metB_s} in the fixed basis $\xi_{(i)}$ using the transformation \eqref{changeofcoordinate}.  Since  $V_1^{n'-1}=$constant in the above equation and right hand side is a function of $x$ for some $\alpha^{l}_{(n'-1)k}$, then we reach to a contradiction and this implies $n'=1$. This is equivalent to $\Sigma$ having the trivial orbit space, i.e. $\mathcal{B}_{\Sigma}=\mathcal{B}_{\mathbb{R}^4}$. Moreover, we obtain $\lambda'_{ij}=\sigma_{ij}=\frac{r^2}{2}\text{diag}(1+x,1-x)$ and by straightforward computation it implies 
\begin{equation}
-\frac{\det\nabla\lambda'}{2\rho^2}=0,
\end{equation}
Then, the equation \eqref{scalarcurv} reduces to 
\begin{equation}
\Delta_2V=0,\qquad \text{$V$ vanishes on axis and infinity}\,.
\end{equation}
By maximum principle on open set $O_{R,\epsilon}=\{(\rho,z):\quad \epsilon<\rho<R\}$, we have $V\equiv 0$ as $R\to\infty$ and $\epsilon\to 0$. By  \eqref{Airhoz} the one form $\beta^i=A^i_{\rho}\td\rho+A^i_z\td z$ is close and simply connectedness of $\Sigma$ implies that there exists a function $\psi^i$ such that $\beta^i=\td\psi^i$, i.e. $\beta^i$ is exact. Then the metric has the following global representation
\begin{equation}
{h}=\frac{\td\rho^2+\td z^2}{2\sqrt{\rho^2+z ^2}}+\sigma_{ij}\td\left(\phi^i+\psi^i\right)\td\left(\phi^j+\psi^j\right)=\frac{\td\rho^2+\td z^2}{2\sqrt{\rho^2+z ^2}}+\sigma_{ij}\td\gamma^i\td\gamma^j\,.
\end{equation} 
where $\gamma^i$ are new rotational angles with period $2\pi$. Hence, ${h}$ is flat metric and $\Sigma=\mathbb{R}^4$.
\end{proof} 

It is natural to expect this positivity result should extend to GB data that do not belong to $\Xi$.  We will return to this point in the final section. 

\section{Mass-angular momenta inequality}
In \cite{alaee2015proof} a local version of a mass-angular momenta inequality for a class of asymptotically flat, maximal, $U(1)^2$-invariant, vacuum black holes was shown.  The $U(1)^2$ isometry group was assumed to act orthogonally transitively (i.e. there exist two-dimensional surfaces orthogonal to the surfaces of transitivity at every point). There is a question regarding the extension of our proof to the non-vacuum case and considering the general $U(1)^2$-invariant metric equation \eqref{generalmetric}. The main problem in the non-vacuum case is whether angular momenta are conserved quantities and twist potentials exist globally. The ADM angular momenta related to the Killing vector $\xi_{(i)}$ for the GB data set $(\Sigma,h,K,\mu,j)$ is
\begin{equation}
J_{(i)}=\frac{1}{8\pi}\lim_{r\to\infty}\int_{\Sigma}K_{\alpha\beta}n^{\alpha}\xi_{(i)}^{\beta}\td s_h
\end{equation}
This is a well-defined quantity and it is a conserved quantity in $U(1)^2$-invariant vacuum spacetimes. With matter source we show under appropriate conditions it remains a conserved quantity. In the previous section we showed that the ADM mass has lower bound, the right hand of equation \eqref{positivepart}. By the Hamiltonian constraint equation  we have
\begin{equation}\label{Ham}
R_h=|K|^2_h+16\pi\mu\geq |K|^2_h
\end{equation}
if $\mu\geq 0$. In order to prove a local mass angular mometa inequality following the argument of\cite{alaee2015proof} we need to first show the global existence of the potentials
\begin{equation}\label{defY}
\td Y^{(i)}=2\star \left(S^{(i)}\wedge \xi_{(1)}\wedge \xi_{(2)}\right)\qquad S^{(i)}_{\alpha}=K_{\alpha\beta}\xi_{(i)}^{\beta}
\end{equation}
where $\star$ is the Hodge star operator with respect to $h$.
\begin{lemma}\label{existenceofangular}
Consider the GB initial data set $(\Sigma,h,K,\mu,j)$. If $\iota_{\xi_{(i)}}j=0$, then $J_{(i)}$ are conserved and global twist potentials $Y^i$ exist.
\end{lemma}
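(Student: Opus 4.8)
The plan is to reduce both assertions—conservation of $J_{(i)}$ and the existence of the twist potentials—to the single fact that the one-form $S^{(i)}$ becomes divergence-free precisely when $\iota_{\xi_{(i)}}j=0$. First I would compute, using $S^{(i)}_\alpha=K_{\alpha\beta}\xi_{(i)}^\beta$, the maximal momentum constraint $\nabla^\alpha K_{\alpha\beta}=8\pi j_\beta$, and the Killing equation $\mathcal{L}_{\xi_{(i)}}h=0$,
\[
\nabla^\alpha S^{(i)}_\alpha=(\nabla^\alpha K_{\alpha\beta})\xi_{(i)}^\beta+K_{\alpha\beta}\nabla^\alpha\xi_{(i)}^\beta=8\pi\,j_\beta\xi_{(i)}^\beta=8\pi\,\iota_{\xi_{(i)}}j,
\]
where the second term drops out because $K$ is symmetric while $\nabla^\alpha\xi_{(i)}^\beta$ is antisymmetric. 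Thus the hypothesis $\iota_{\xi_{(i)}}j=0$ is exactly the statement $\nabla^\alpha S^{(i)}_\alpha=0$, equivalently $\td\star S^{(i)}=0$, so that $\star S^{(i)}$ is a closed three-form.

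Conservation of $J_{(i)}$ then follows at once. Writing the angular momentum as a flux $J_{(i)}=\frac{1}{8\pi}\int_{S^3_r}\star S^{(i)}$ and applying Stokes' theorem to the region bounded by two homologous cross-sections—the sphere at the asymptotically flat end and a cross-section at the inner end or horizon—the difference of the two fluxes equals $\int\td\star S^{(i)}=0$. Hence the value of $J_{(i)}$ is independent of the chosen cross-section, which is the desired conservation statement.

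For the twist potentials I would rewrite the defining three-form through the interior-product identity $\star\!\left(S^{(i)}\wedge\xi_{(1)}\wedge\xi_{(2)}\right)=\pm\,\iota_{\xi_{(2)}}\iota_{\xi_{(1)}}\star S^{(i)}$, so that up to sign $\td Y^{(i)}=\eta^{(i)}$ with $\eta^{(i)}\equiv 2\,\iota_{\xi_{(2)}}\iota_{\xi_{(1)}}\star S^{(i)}$. Closedness of $\eta^{(i)}$ is then a clean consequence of Cartan's magic formula combined with invariance: since $\mathcal{L}_{\xi_{(j)}}h=\mathcal{L}_{\xi_{(j)}}K=0$ and $[\xi_{(1)},\xi_{(2)}]=0$, one has $\mathcal{L}_{\xi_{(j)}}\!\left(\star S^{(i)}\right)=0$. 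Setting $\beta=\iota_{\xi_{(1)}}\star S^{(i)}$, Cartan's formula gives $\td\beta=\mathcal{L}_{\xi_{(1)}}\star S^{(i)}-\iota_{\xi_{(1)}}\td\star S^{(i)}=0$, the first term vanishing by invariance and the second by the divergence identity above. A second application, $\td\iota_{\xi_{(2)}}\beta=\mathcal{L}_{\xi_{(2)}}\beta-\iota_{\xi_{(2)}}\td\beta=0$, then yields $\td\eta^{(i)}=0$.

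Finally, because $\Sigma$ is simply connected, $H^1_{\mathrm{dR}}(\Sigma)=0$, so the closed one-form $\eta^{(i)}$ is globally exact and a single-valued potential $Y^{(i)}$ exists on all of $\Sigma$. I expect the principal difficulty to lie not in the formal closedness computation but in the global and regularity bookkeeping: one must verify that $\eta^{(i)}$ is genuinely smooth and closed across the degenerate locus—the axes $\Gamma$ and the corners where a Killing field vanishes and $\det\lambda$ degenerates—that the decay of the GB data at both ends is strong enough to justify the Stokes argument and to give $Y^{(i)}$ well-defined limits, and that simple connectivity is invoked correctly to pass from local primitives to a genuinely global potential.
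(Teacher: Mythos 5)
Your proposal is correct and follows essentially the same route as the paper: conservation via Stokes' theorem applied to $\star S^{(i)}$ between two cross-sections using $\nabla^\alpha S^{(i)}_\alpha = 8\pi\,\iota_{\xi_{(i)}}j = 0$ (with the $K_{\alpha\beta}\nabla^\alpha\xi_{(i)}^\beta$ term killed by symmetry), and existence of $Y^{(i)}$ by rewriting $\star(S^{(i)}\wedge\xi_{(1)}\wedge\xi_{(2)})$ as a double interior product of $\star S^{(i)}$, proving closedness from the Killing invariance and the constraint, and invoking simple connectedness. The only difference is that you spell out the Cartan-formula steps that the paper compresses into the single stated identity $\td\mathcal{K}^{(i)}=-\iota_{\xi_{(1)}}\iota_{\xi_{(2)}}\td\star S^{(i)}$.
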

\begin{proof}
Let $\mathcal{N} \subset \Sigma$  and $\mathcal{S}_1$, $\mathcal{S}_2$ are two 3 dimensional surfaces with isometry subgroup $U(1)^2$ such that $\partial \mathcal{N}=\mathcal{S}_1\cup \mathcal{S}_2$. Then if we consider $\iota_{\xi_{(i)}}j=0$ ($\iota_{\xi_{(i)}}$ is the interior derivative) we have
\begin{eqnarray}
0=\int_{\mathcal{N}}\iota_{\xi_{(i)}}j\, \td\Sigma=\frac{1}{8\pi}\int_{\mathcal{N}}\iota_{\xi_{(i)}}\text{div}K\, \td\Sigma=\frac{1}{8\pi}\int_{\mathcal{S}_1\cup \mathcal{S}_2}K_{\alpha\beta}{n}^{\alpha}{\xi_{(i)}}^{\beta}\, \td s_h=J_{(i)}(\mathcal{S}_2)-J_{(i)}(\mathcal{S}_1)\,.\nonumber
\end{eqnarray}
Thus the angular momenta are conserved quantities. For the second part, let
\begin{equation}
\mathcal{K}^{(i)}=\star \left(S^{(i)}\wedge \xi_{(1)}\wedge \xi_{(2)}\right)\qquad S^{(i)}_{\alpha}=K_{\alpha\beta}\xi_{(i)}^{\beta}\,.
\end{equation}
Then 
\begin{equation}
\td \mathcal{K}^{(i)}=-\iota_{\xi_{(1)}}\iota_{\xi_{(2)}}\td\star S^{(i)}\,.
\end{equation}
then by the Killing property of ${\xi_{(i)}}$ and constraint equation we have  $\star \td\star S^{(i)}=-\iota_{\xi_{(i)}} \text{div} K =-\iota_{\xi_{(i)}}j=0$.  Therefore, since $\Sigma$ is simply connected the potentials $Y^{(i)}$ globally exist. Note that the above result can be extended to $D$-dimensional initial data with $U(1)^{D-2}$ commuting Killing vectors \cite{alaee2014thesis}.
\end{proof} \noindent 

Recall that $t-\phi^i$ symmetric data consists of the subclass of GB initial data with the property that $h_{\alpha \beta} \to h_{\alpha \beta}$ and $K_{\alpha \beta} \to -K_{\alpha \beta}$ under the diffeomorphism $\phi^i \to -\phi^i$\cite{Figueras2011}. It can be shown that for vacuum ($\mu = j = 0$) $t-\phi^i$-symmetric data,  the metric takes the form \eqref{generalmetric} with $A_a^i=0$ and the extrinsic curvature is determined fully from the twist potentials $Y^i$\cite{alaee2014mass}.  Thus this data is characterized by five scalar functions, or equivalently,  the triple $u=(v,\lambda',Y)$, where $v$ is a function, $\lambda'$ is a positive definite symmetric $2\times 2$ matrix, and $Y$ is a column vector\cite{alaee2014mass}. Explicitly, for vacuum $t-\phi^i$ symmetric data, we can express the extrinsic curvature as
\begin{equation} 
K_{\alpha\beta} = 2e^{-2v} S^t_{(\beta} \lambda'^{-1} \Phi_{\alpha)}
\end{equation} where $\Phi^{\alpha} = (\xi^{\alpha} _{(1)},\xi^{\alpha} _{(2)})^t$ is a column vector and $S= (S^1,S^2)^t$ is a column vector with components $S^i$ defined by \eqref{defY} \cite{alaee2014thesis}.  This motivates the following definition.
\begin{Def} Let $(\Sigma,h,K,\mu,j)$ be a GB initial data set with $\mu \geq 0$ and $\iota_{\xi_{(i)}} j=0$.  We define the associated \emph{reduced data} to be the vacuum $t-\phi^i$-symmetric data characterized by the triple $u = (v,\lambda',Y)$ where $(v,\lambda')$ is extracted from the original data and $Y$ is defined in \eqref{defY}.\label{rdata}
\end{Def}
The ADM mass of a given GB data set is bounded below by the ADM mass of its associated reduced data. This can be shown as follows. 
Let introduce the co-frame of one forms $\{\theta^{\alpha}\}$ 
\begin{equation}
\theta^a= e^{v+U}\td x^a,\qquad \theta^{i+2}=e^v\left( \td\phi^i+A^i_a\td x^a\right),
\end{equation} so that the metric can be expressed as
\begin{equation}
 h=(\delta_2)_{ab}\theta^a\theta^b+\lambda'_{ij}\theta^{i+2}\theta^{j+2}\,.
\end{equation} with associated dual frame of basis vectors
\begin{equation}
e_a = e^{-(v+U)}\left(\partial_a - A^i_a \partial_{\phi^i}\right) \qquad e_{i+2} = e^{-v} \partial_{\phi^i}\,.
\end{equation} where $x^a=(\rho,z)$.Then we have
\begin{eqnarray}
\frac{\td Y^{(i)}}{2}&=&\epsilon_{\alpha\beta\gamma\lambda}K^{\beta}_{\delta}\xi_{(i)}^{\delta}\xi_{(1)}^{\gamma}\xi_{(2)}^{\lambda}\td x^{\alpha}\nonumber\\
&=&\epsilon(\partial_a,\partial_b,\partial_{\phi_1},\partial_{\phi_2})K(\td x^b,\partial_{\phi_i})\td x^a\nonumber\\
&=& e^{3v}\epsilon(e_a,e_b,e_1,e_2)K(\theta^b,\partial_{\phi^i}) \theta^a \nonumber \\
&=& e^{3v} \rho\epsilon_{ab}K(\theta^b,e_i) \theta^a\,.
\end{eqnarray} where  $\epsilon_{ab}$ is the volume form on the flat two-dimensional metric.  Noting $K_{bi} = K(e_b,e_i) = K(\theta^b,e_i)$ we read off
\begin{equation}
K_{2i} = \frac{e^{-(4v+U)}}{2\rho} \partial_\rho Y^{(i)} \;,\qquad  K_{1i} = -\frac{e^{-(4v+U)}}{2\rho} \partial_z Y^{(i)} \,.
\end{equation}  Noting that in this basis,
\begin{eqnarray}
|K|^2_h &=& K_{11}^2 + K_{22}^2 + 2K_{12}^2 + 2\lambda'^{ij}K_{1i}K_{1j} + 2\lambda'^{1j}K_{2i}K_{2j} + \lambda'^{ij}\lambda'^{kl}K_{ik}K_{jl}\label{Kinequality} \\
&\geq & 2\lambda'^{ij}K_{1i}K_{1j} + 2\lambda'^{ij}K_{2i}K_{2j} = \frac{e^{-2(4v+U)}}{2\rho^2} \left[\nabla Y^t\lambda'^{-1}\nabla Y\right]\,.\nonumber
\end{eqnarray} where $Y=(Y^{(1)},Y^{(2)})^t$. 
Using \eqref{Ham} and \eqref{positivepart} we arrive at
\begin{equation} 
m\geq \frac{\pi}{4}\int_{\mathcal{B}}\left[2\Delta_2U-\frac{\det\nabla\lambda'}{2\rho^2}+6\abs{\nabla v}^2 + \frac{e^{-6v}}{2\rho^2} \nabla Y^t\lambda'^{-1}\nabla Y \right]\, \td\mu\,.\label{massinequality}
\end{equation} 
Then it follows directly form the results of \cite{alaee2014mass} that we can rewrite the right hand side of equation \eqref{massinequality} as \footnote{There is a sign mistake in \cite{alaee2014mass} because of orientation. The sign of summentaon over rods should be positive.}
\begin{equation}
 \mathcal{M}\equiv\frac{\pi}{4}\int_{\mathcal{B}}\left(-\frac{\det\nabla\lambda'}{2\rho^2}+e^{-6v}\frac{\nabla Y^t\lambda'^{-1}\nabla Y}{2\rho^2}+6\abs{\nabla v}^2\right)\, \, \td\mu+\frac{\pi}{4}\sum_{\text{rods}}\int_{I_s}\log V_s\,\td z\,.\label{Massfunctional}
\end{equation}
which defines the mass functional $\mathcal{M}=\mathcal{M}\left(v,\lambda',Y\right)$ and $V_s$ is defined in Definition \ref{GBdata}-\ref{Vaxis}.  $\mathcal{M}$ evaluates to the ADM mass for vacuum,  $t-\phi^i$ symmetric data. Thus we have shown that $m \geq \mathcal{M}= m_{R}$ where $m_R$ is the ADM mass of the associated reduced data.

One would expect the mass functional is positive definite for all orbit spaces on asymptotically flat $\Sigma$ with positive scalar curvature.  However, positivity of $\mathcal{M}$ has been only established for $\mathcal{B}\in \Xi$ \cite{alaee2014mass}. Thus we have the following conjecture.
\begin{con}
Consider GB initial data set then $\mathcal{M}\left(v,\lambda',Y\right)$ is a non-negative functional for any orbit space.
\end{con}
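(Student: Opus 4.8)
The plan is to recognize the mass functional \eqref{Massfunctional} as a renormalized harmonic-map energy and then to replace the annulus-by-annulus completion of squares (which is precisely what forces the distinctness hypothesis defining $\Xi$) by a global convexity argument. Concretely, the reduced data $u=(v,\lambda',Y)$ --- with $\lambda'$ a symmetric positive-definite $2\times2$ matrix subject to $\det\lambda'=\rho^2$, $v$ a scalar, and $Y$ a two-vector --- parametrizes a five-dimensional target, which one identifies with the non-positively curved symmetric space $\mathcal{X}=SL(3,\mathbb{R})/SO(3)$. Writing $\Phi\colon\mathcal{B}\setminus\partial\mathcal{B}\to\mathcal{X}$ for the associated map and using the standard reduction of the two-dimensional axisymmetric problem to $\mathbb{R}^3\setminus\Gamma$, the indefinite piece of $-\det\nabla\lambda'/(2\rho^2)$ integrates by parts against the axis, so that the bulk of \eqref{Massfunctional} becomes (up to a positive constant) the manifestly non-negative energy density $\abs{d\Phi}^2_{\mathcal{X}}$ plus boundary contributions localized on $\Gamma$, the latter being matched against the rod term $\frac{\pi}{4}\sum_{\text{rods}}\int_{I_s}\log V_s\,\td z$. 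The first step is to make this identification precise and to isolate exactly which divergence at $\Gamma$ is subtracted, so that $\mathcal{M}[u]$ is seen as a single, finite, renormalized energy whose positivity is a statement about the competition between the bulk energy and the (sign-indefinite) rod terms.

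I would then prove $\mathcal{M}[u]\ge0$ by comparison with a model configuration $u_0$ adapted to the given orbit space (flat $\mathbb{R}^4$ when $\mathcal{B}=\mathcal{B}_{\mathbb{R}^4}$, and the appropriate Myers--Perry or Lens-type model otherwise). Because $\mathcal{X}$ is non-positively curved, the energy functional is geodesically convex, so along the geodesic homotopy $u_t$ joining $u_0$ to $u$ the second variation is non-negative and the difference $\mathcal{M}[u]-\mathcal{M}[u_0]$ equals a genuinely convergent integral of the form $\int(\abs{d\Phi}^2-\abs{d\Phi_0}^2)$ together with boundary terms that cancel once $u$ and $u_0$ carry the \emph{same} rod data. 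This is the conceptual replacement for the explicit squares \eqref{positive}: the axis conditions \eqref{GR1}, which for $\mathcal{B}\in\Xi$ guarantee the pointwise positivity of each $B_s$ integrand, should be recast as the statement that $u$ and $u_0$ agree to leading order on every rod and at every corner, so that the renormalized energy difference is non-negative irrespective of the combinatorial rod structure. It then remains to verify $\mathcal{M}[u_0]\ge0$ for every admissible model, which for the trivial orbit space is the equality $\mathcal{M}[u_0]=0$ established inside the proof of Theorem \ref{positive mass}.

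A complementary, more hands-on route is to extend the parametrization \eqref{lambdaGH} to exactly the annuli excluded by Definition \ref{orbitclass}, namely those $B_s$ on which a \emph{single} rotational Killing field degenerates on both $\mathcal{I}^+\cap B_s$ and $\mathcal{I}^-\cap B_s$. On such an annulus the pair $(\bm{v}_s,\bm{w}_s)$ no longer forms a basis, so I would introduce an adapted frame that singles out the shrinking circle and re-derive the analogue of \eqref{positive}; the resulting integrand is expected once again to be a sum of squares, but now accompanied by interface terms at $r=e^{b_s}$ coming from the change of parametrization between adjacent annuli. Summing over $s$, one hopes these interface terms telescope, leaving only the rod contributions $\int_{I_s}\log V_s\,\td z$ to be controlled.

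The principal obstacle is precisely the control of these rod and corner contributions for an arbitrary combinatorial rod structure. The terms $\int_{I_s}\log V_s\,\td z$ are sign-indefinite, and at the corners --- where $\det\lambda$ vanishes to second order and the full $U(1)^2$ degenerates --- the map $\Phi$ is singular, so the renormalization must be carried out corner-by-corner with transition vectors constrained only by $\det(\bm{v}_s,\bm{v}_{s+1})=\pm1$. The distinctness hypothesis defining $\Xi$ is exactly what makes the local completion of squares close up with manifestly non-negative boundary data; without it one is driven to the global convexity argument, whose success hinges on two analytically delicate points: the existence and regularity of the energy-minimizing singular harmonic map $\Phi_0$ with the prescribed rod and corner asymptotics (including arbitrary $SL(2,\mathbb{Z})$ transitions at corners), and the verification that the subtracted boundary terms genuinely cancel in the energy difference rather than contributing an uncontrolled sign. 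Since establishing non-negativity of $\mathcal{M}[u_0]$ for \emph{all} admissible models is itself tantamount to a general positive-mass statement for the corresponding stationary backgrounds, I expect this model computation, together with the corner renormalization, to be where the real difficulty lies.
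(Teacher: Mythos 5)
The first thing to note is that the paper does not prove this statement: it is deliberately recorded as a Conjecture, precisely because positivity of $\mathcal{M}$ has only been established for orbit spaces in the admissible set $\Xi$, where the annulus-by-annulus parametrization \eqref{lambdaGH} turns the integrand into the explicit sum of squares \eqref{positive}. There is therefore no proof in the paper to compare against, and your submission is, as you yourself concede in the final paragraph, a research program rather than a proof.

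Within that program there is a genuine gap which your closing remarks identify but do not close. The geodesic-convexity argument on the non-positively curved target yields, at best, $\mathcal{M}[u]\geq\mathcal{M}[u_0]$ for a comparison configuration $u_0$ with the same rod data, and only provided (i) such a $u_0$ exists as a suitably regular singular harmonic map for the given combinatorial rod structure, (ii) $u_0$ is a critical point so that the first variation vanishes along the geodesic homotopy, and (iii) the renormalized boundary and corner terms in the energy difference genuinely cancel rather than contributing an uncontrolled sign. Even granting all three, you are left with showing $\mathcal{M}[u_0]\geq 0$ for every admissible model; for a general orbit space no such model is known to exist (regular stationary vacuum solutions are not available for arbitrary five-dimensional rod structures), and, as you note, verifying its non-negativity is essentially equivalent to the conjecture itself. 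The sign-indefinite rod contributions $\frac{\pi}{4}\sum_{\text{rods}}\int_{I_s}\log V_s\,\td z$ and the corner transitions constrained only by $\det(\bm{v}_s,\bm{v}_{s+1})=\pm 1$ are exactly where the distinctness hypothesis of Definition \ref{orbitclass} is used in the known argument, and your proposal replaces that hypothesis with an unproven existence-and-positivity statement for model data. The outline is a sensible direction of attack, and correctly diagnoses why the method of Theorem \ref{positive mass} stops at $\Xi$, but it does not constitute a proof of the conjecture.
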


We set $\bar{u}=(\bar{v},\bar{\lambda}', \bar{Y})$  where $\bar{\lambda'}$ is a symmetric $2 \times 2$ matrix such that $\det\bar{\lambda'}=0$. Consider $\bar{u}$ as a perturbation about some fixed initial data $u_0$ defined in Definition \ref{Def1} . This should consist of five free degrees of freedom, and the apparent restriction $\det\bar{\lambda'} =0$ is simply a gauge choice that preserves the condition $\det \lambda' = \rho^2$ under the perturbation.  Let $\rho_0>0$ and  $\Omega_{\rho_0}\equiv\{(\rho,z,\varphi)| \rho>\rho_0\}$ and select the perturbation $\bar{Y}$ and $\bar{\lambda}$ in $C^\infty_{c}(\Omega_{\rho_0})$. Now for a (unbounded) domain $\Omega$, we introduce the following weighted spaces of $C^1$ functions with norm
\begin{equation}
\norm{f}_{C^1_{s}(\Omega)}=\sup_{x\in \Omega}\{\sigma^{-s}\abs{f}+\sigma^{-s+1}\abs{\nabla f}\}
\end{equation} is finite
with $s<-1$ and $\sigma=\sqrt{r^2+1}$ and for a column vector and a matrix we define respectively
\begin{equation}
\abs{\bar{Y}} \equiv \left(\bar{Y}^t \lambda'^{-1}_0 \bar{Y}\right)^{1/2}\;, \quad
\abs{\bar{\lambda}'} \equiv \left(\text{Tr}\left[\bar{\lambda}'^t\bar{\lambda}'\right]\right)^{1/2}
\end{equation}
Then we define the Banach space $B$ by
\begin{equation}
\norm{\bar{u}}_{B}\equiv\norm{\bar{v}}_{C^1_{s}(\mathbb{R}^3)}+\norm{\bar{\lambda}'}_{C^1_{s}(\Omega_{\rho_0})}+\norm{\bar{Y}}_{C^1_{s}(\Omega_{\rho_0})}
\end{equation}
and similar to \cite{alaee2015proof} we define the extreme class of initial data
\begin{Def}\label{Def1}
The set of \emph{extreme class} $E$ is the collection of data arising from extreme, asymptotically flat, $\mathbb{R}\times U(1)^2$ invariant black holes which consist of triples $u_0 = (v_0,\lambda'_0,Y_0)$ where $v_0$ is a scalar, $\lambda'_0=[\lambda_{ij}]$ is a positive definite $2\times 2$ symmetric matrix, and $Y_0$ is a column vector with the following bounds for $\rho\leq r^2$
\begin{enumerate}
\item $\frac{\nabla Y_0^t\lambda^{-1}_0\nabla Y_0}{X_0}\leq Cr^{-4}$ and $e^{-2v_0}\frac{\nabla Y_0^t\lambda^{-1}_0\nabla Y_0}{X_0}\leq Cr^{-2}$ in $\mathbb{R}^3$
 where $\lambda_0=e^{2v_0}\lambda'_0$
 \item $C_1\rho I_{2\times 2}\leq\lambda_{0}\leq C_2 \rho I_{2\times 2}$ and $C_3\rho^{-1}I_{2\times 2}\leq\lambda^{-1}_{0}\leq C_4\rho^{-1}I_{2\times 2}$ in $\Omega_{\rho_0}$
\item $\rho^2\leq X_0$ in $\mathbb{R}^3$ where $X_0=\det\lambda_0$ and $X_0^2\leq C' \rho^4$ in $\Omega_{\rho_0}$ where $\lim_{\rho_0\to 0}C'=\infty$
\item $\abs{\nabla v_0}^2\leq C r^{-4}$, $\abs{\nabla\ln X_0}^2\leq C\rho^{-2}$ in $\mathbb{R}^3$ and $\abs{\nabla\lambda_0\lambda^{-1}_0}^2\leq C\rho^{-2}$ in $\Omega_{\rho_0}$ 
\item $V=\bar{V}(x)r^{-2}+o_1(r^{-2})$ and $\int_{-1}^{1}\bar{V}(x)\td x=0$ as $r\to\infty$.
\end{enumerate} 
\end{Def} This definition was motivated by studying the geometry of the initial data for the extreme Myers-Perry and black ring solutions.   In has been established that such geometries are local minimizers of the mass amongst suitably nearby data with the same orbit space \cite{alaee2015proof}.  We can now state our second result:
\begin{thm}
Let $(\Sigma,h,K,\mu,j)$ be a GB initial data set with mass $m$ and fixed angular momenta $J_{(1)}$ and $J_{(2)}$ and fixed orbit space $\mathcal{B} \in \Xi$ satisfying $\mu\geq 0$ and $\iota_{\xi_{(i)}}j=0$. Let $u = (v,\lambda',Y)$ describe the associated reduced data as in Definition \ref{rdata} and write $u = u_0 +\bar{u}$ where $u_0$ is extreme data with the same angular momenta and orbit space of the GB initial data set. If $\bar{u}\in B$ is sufficiently small then 
\begin{equation}
 m\geq f(J_{(1)},J_{(2)}) = \mathcal{M}(u_0)
\end{equation} for some $f$ which depends on the orbit space $\mathcal{B}$. Moreover, $m=f(J_{(1)},J_{(2)})$ for GB initial data set in a neighbourhood if and only if the data are extreme data and $\mu=j=0$.
\end{thm}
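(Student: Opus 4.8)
The plan is to combine the lower bound $m\geq\mathcal{M}(u)$ already derived in \eqref{massinequality}--\eqref{Massfunctional} with a \emph{local minimization} property of the extreme reference data $u_0$ for the mass functional $\mathcal{M}$. In words: the ADM mass dominates the reduced mass, and the reduced mass is in turn minimized by the extreme configuration sharing the same angular momenta and orbit space. First I would record that by Lemma \ref{existenceofangular} the potentials $Y^{(i)}$ exist globally and $J_{(i)}$ are conserved, so the reduced data $u=(v,\lambda',Y)$ of Definition \ref{rdata} is well defined and carries the same $J_{(1)},J_{(2)}$ as the original data. The inequality $m\geq\mathcal{M}(u)$ then follows verbatim from \eqref{Ham}, \eqref{Kinequality} and \eqref{massinequality}, so it only remains to prove $\mathcal{M}(u)\geq\mathcal{M}(u_0)$.

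For this second step I would exhibit $\mathcal{M}$, away from the rod contribution, as a harmonic-map (sigma-model) energy for a map from the orbit space into the non-positively curved symmetric space $SL(3,\mathbb{R})/SO(3)$, in which the five functions $(v,\lambda',Y)$ (with $\det\lambda'=\rho^2$) are the standard coordinates; note $\dim SL(3,\mathbb{R})/SO(3)=5$ matches the reduced degrees of freedom. The crucial analytic input, imported from \cite{alaee2015proof}, is that such energies are geodesically convex in the target. I would interpolate between $u_0$ and $u=u_0+\bar u$ along the target geodesic $u_t$ joining them and study $g(t)=\mathcal{M}(u_t)$. Convexity gives $g''(t)\geq 0$; since $u_0$ is extreme it is a critical point of $\mathcal{M}$, so $g'(0)=0$, the boundary terms of the first variation being annihilated by the matching boundary data (same orbit space, same rod structure, hence the rod term is common to $u$ and $u_0$) and by the weighted decay encoded in $\bar u\in B$ with $s<-1$. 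Hence $g(1)\geq g(0)$, i.e. $\mathcal{M}(u)\geq\mathcal{M}(u_0)$. Chaining the two inequalities and setting $f(J_{(1)},J_{(2)})=\mathcal{M}(u_0)$ yields $m\geq f(J_{(1)},J_{(2)})$; the dependence on $\mathcal{B}$ enters through the rod term $\frac{\pi}{4}\sum_{\text{rods}}\int_{I_s}\log V_s\,\td z$ and through the extreme reference $u_0$ determined by $(J_{(1)},J_{(2)},\mathcal{B})$.

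For the rigidity statement I would run both inequalities as equalities. First, $m=\mathcal{M}(u)$ forces saturation in \eqref{Ham} and \eqref{Kinequality}: saturating \eqref{Ham} gives $\mu=0$, while saturating \eqref{Kinequality} kills $K_{11},K_{22},K_{12}$ and the fibre components $K_{ij}$, so $K$ takes the vacuum $t-\phi^i$ symmetric form and the original data coincides with its reduced data. Second, $\mathcal{M}(u)=\mathcal{M}(u_0)$ together with the strict convexity of $g$ along the connecting geodesic forces that geodesic to be constant, i.e. $\bar u\equiv 0$, whence $u=u_0$ is extreme. Finally, with $\mu=0$ and $K$ of the vacuum $t-\phi^i$ symmetric twist-potential form, the momentum constraint $\mathrm{div}\,K=8\pi j$ is identically satisfied by the construction of the potentials in \eqref{defY}: the Killing components vanish by the hypothesis $\iota_{\xi_{(i)}}j=0$ and the base components vanish once $K_{ab}=0$ and $K_{a i}$ is of twist form, giving $j\equiv 0$.

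I expect the genuine obstacle to be the second step, and specifically the justification that $g'(0)=0$ and that $g$ is (strictly) convex in the presence of the singular cylindrical end at $r\to 0$, where the extreme reference degenerates ($v_0\sim-\log r$, $\det\lambda_0\to 0$). Controlling the boundary contributions there, and verifying that the bounds of Definition \ref{Def1} together with $\bar u\in B$ make all integrations by parts and the convexity estimate legitimate, is precisely the delicate weighted analysis carried out in \cite{alaee2015proof}; here one must additionally check that passing to the reduced data of a \emph{general} GB set, with the non--orthogonally--transitive metric \eqref{generalmetric} and nonzero $\mu,j$, does not spoil those estimates, the matter terms having already been disposed of through \eqref{Ham} and Lemma \ref{existenceofangular}.
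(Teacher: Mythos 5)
Your overall skeleton is exactly the paper's: establish the well-definedness of the reduced data via Lemma \ref{existenceofangular}, chain $m\geq\mathcal{M}(u)$ from \eqref{Ham}, \eqref{Kinequality} and \eqref{massinequality}, and then reduce the remaining inequality $\mathcal{M}(u)\geq\mathcal{M}(u_0)$ to the main theorem of \cite{alaee2015proof}. The paper does nothing more than this: it cites Theorem 1.1 of \cite{alaee2015proof} for the second inequality and cites the same reference for rigidity. Your treatment of the first step and of the matter terms ($\mu=0$ from saturation of \eqref{Ham}, $K_{ab}=K_{ij}=0$ from saturation of \eqref{Kinequality}, hence $j=0$ from the momentum constraint) is in fact more explicit than what the paper writes down.

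The one substantive objection is to your reconstruction of the second step. You propose to prove $\mathcal{M}(u)\geq\mathcal{M}(u_0)$ by joining $u_0$ to $u$ with a geodesic in the target and invoking geodesic convexity, $g''(t)\geq 0$ together with $g'(0)=0$. If that argument were available with all boundary terms controlled, it would yield the inequality for \emph{arbitrary} $\bar{u}$ with the right decay, making the smallness hypothesis $\norm{\bar{u}}_B<\epsilon$ redundant and promoting the paper's Conjecture on global non-negativity of $\mathcal{M}$ to a theorem; the fact that both this paper and \cite{alaee2015proof} state only a \emph{local} result is a signal that global convexity along connecting geodesics has not been established in this setting (the constraint $\det\lambda'=\rho^2$, the rod term in \eqref{Massfunctional}, and the degeneration at the cylindrical end all obstruct the clean $SL(3,\mathbb{R})/SO(3)$ sigma-model picture). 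The mechanism actually used in \cite{alaee2015proof} (following \cite{dain2013lower}) is a Taylor expansion $\mathcal{M}(u_0+\bar{u})=\mathcal{M}(u_0)+D\mathcal{M}(u_0)[\bar{u}]+\tfrac{1}{2}D^2\mathcal{M}(u_0)[\bar{u},\bar{u}]+R$, with vanishing first variation at the extreme critical point, a Carter-type identity giving non-negativity of the second variation, and the smallness of $\norm{\bar{u}}_B$ used precisely to absorb the cubic remainder $R$. Your hedge at the end correctly identifies where the difficulty lies, but the argument as written claims more than the cited input delivers; replace the convexity step by the second-variation expansion (or simply cite Theorem 1.1 of \cite{alaee2015proof} as the paper does) and the proof is complete.
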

\begin{proof}
First, consider the GB data with $\mu\geq 0$ and $\iota_{\xi_{(i)}}j=0$. Then by Lemma \ref{existenceofangular}, there exist global potentials $Y^i$ such that $\abs{K}_{\bm{h}}$ satisfies in inequality \eqref{Kinequality} and it yields $m\geq \mathcal{M}(u)$, where $u$ is the associated reduced data. Second, since $u=u_0+\bar{u}$, then all the assumptions of Theorem 1.1 of \cite{alaee2015proof} hold and it follows that there exists $\epsilon>0$ such that if $\norm{\bar{u}}_B < \epsilon$, then $m\geq \mathcal{M}(u_0)$. Finally, by \cite{alaee2015proof} it follows the inequality is saturated if and only if the data is extreme data.
\end{proof}
\subsection*{Acknowledgments}
AA is partially supported by a graduate scholarship from Memorial University. HKK is supported by an NSERC Discovery Grant.  We thank the organizers of the workshop `Constraint equations and mass-momenta inequalities' held May 11-15 2015 at the Fields Institute, Toronto where the work presented here was partially completed.  We would also like to thank Sergio Dain and Marcus Khuri for helpful discussions. 

\appendix
\numberwithin{equation}{section}

\bibliographystyle{unsrt}
\bibliographystyle{abbrv}  
\bibliography{masterfile}

\begin{thebibliography}{10}

\bibitem{alaee2015proof}
Aghil Alaee and Hari~K Kunduri.
\newblock Proof of the local mass-angular momenta inequality for $u(1)^{2}$
  invariant black holes.
\newblock {\em Classical and Quantum Gravity}, 32(16):165020, 2015.

\bibitem{brill1959positive}
Dieter~R Brill.
\newblock On the positive definite mass of the bondi-weber-wheeler
  time-symmetric gravitational waves.
\newblock {\em Annals of Physics}, 7(4):466--483, 1959.

\bibitem{dain2008proof}
Sergio Dain et~al.
\newblock Proof of the angular momentum-mass inequality for axisymmetric black
  holes.
\newblock {\em J. Differ. Geom}, 79(1):33--67, 2008.

\bibitem{gibbons2006positive}
GW~Gibbons and G~Holzegel.
\newblock The positive mass and isoperimetric inequalities for axisymmetric
  black holes in four and five dimensions.
\newblock {\em Classical and Quantum Gravity}, 23(22):6459, 2006.

\bibitem{chrusciel2008masspositivity}
Piotr~T Chru{\'s}ciel.
\newblock Mass and angular-momentum inequalities for axi-symmetric initial data
  sets i. positivity of mass.
\newblock {\em Annals of Physics}, 323(10):2566--2590, 2008.

\bibitem{dain2013lower}
Sergio Dain, Marcus Khuri, Gilbert Weinstein, and Sumio Yamada.
\newblock Lower bounds for the area of black holes in terms of mass, charge,
  and angular momentum.
\newblock {\em Physical Review D}, 88(2):024048, 2013.

\bibitem{alaee2014mass}
Aghil Alaee and Hari~K Kunduri.
\newblock Mass functional for initial data in 4+ 1-dimensional spacetime.
\newblock {\em Physical Review D}, 90(12):124078, 2014.

\bibitem{hollands2011further}
Stefan Hollands, Jan Holland, and Akihiro Ishibashi.
\newblock Further restrictions on the topology of stationary black holes in
  five dimensions.
\newblock In {\em Annales Henri Poincare}, volume~12, pages 279--301. Springer,
  2011.

\bibitem{orlik1970actions}
Peter Orlik and Frank Raymond.
\newblock Actions of the torus on 4-manifolds. i.
\newblock {\em Transactions of the American Mathematical Society},
  152(2):531--559, 1970.

\bibitem{hollands2008uniqueness}
Stefan Hollands and Stoytcho Yazadjiev.
\newblock Uniqueness theorem for 5-dimensional black holes with two axial
  killing fields.
\newblock {\em Communications in Mathematical Physics}, 283(3):749--768, 2008.

\bibitem{noguchi1990geometric}
Junjir{\=o} Noguchi and Takushiro Ochiai.
\newblock {\em Geometric function theory in several complex variables}.
\newblock Number~80. American Mathematical Soc., 1990.

\bibitem{bartnik1986mass}
Robert Bartnik.
\newblock The mass of an asymptotically flat manifold.
\newblock {\em Communications on pure and applied mathematics}, 39(5):661--693,
  1986.

\bibitem{harmark2009domain}
Troels Harmark.
\newblock Domain structure of black hole space-times.
\newblock {\em Physical Review D}, 80(2):024019, 2009.

\bibitem{alaee2014thesis}
Aghil Alaee.
\newblock Geometric inequalities for initial data with symmetries.
\newblock {\em Ph.D. thesis, Memorial University}, 2015.

\bibitem{Figueras2011}
Pau Figueras, Keiju Murata, and Harvey~S Reall.
\newblock Black hole instabilities and local penrose inequalities.
\newblock {\em Classical and Quantum Gravity}, 28(22):225030, 2011.

\end{thebibliography}

\end{document}